\documentclass{article}
\pdfpagewidth=8.5in
\pdfpageheight=11in

\usepackage{ijcai22}
\usepackage{times}
\usepackage{soul}
\usepackage{url}
\usepackage[hidelinks]{hyperref}
\usepackage[utf8]{inputenc}
\usepackage[small]{caption}
\usepackage{graphicx}
\usepackage{amsmath}
\usepackage{amsthm}
\usepackage{booktabs}

\urlstyle{same}

\usepackage{xfrac} 
\usepackage{amsmath, amsthm,amssymb,amsfonts}
\usepackage{thmtools}
\usepackage{thm-restate}
\usepackage{parskip}
\usepackage{enumitem}
\usepackage[ruled, linesnumbered]{algorithm2e}

\usepackage[capitalise]{cleveref}

\usepackage{tikz}
\usetikzlibrary{positioning,arrows,automata,shapes,decorations,decorations.markings,calc, matrix,decorations.pathmorphing, patterns,backgrounds}
\def\eps{\varepsilon}

\def\bigO{\mathcal{O}}

\def\R{\mathbb{R}}

\newlist{compactenum}{enumerate}{3} 
\setlist[compactenum]{label=\arabic*., nosep,leftmargin=*}
\crefname{compactenumi}{Item}{Items}

\DeclareMathOperator*{\argmax}{arg\,max}
\DeclareMathOperator*{\argmin}{arg\,min}

\newtheorem{lemma}{Lemma}
\newtheorem{corollary}{Corollary}

\makeatletter
\newenvironment{myproof}[1][\proofname]{\par
  \pushQED{\qed}%
  \normalfont \partopsep=\z@skip \topsep=\z@skip
  \trivlist
  \item[\hskip\labelsep
        \itshape
    #1\@addpunct{.}]\ignorespaces
}{%
  \popQED\endtrivlist\@endpefalse
}
\makeatother

\newcommand{\fit}{f}

\newcommand{\Reals}{\mathbb{R}}
\newcommand{\Neighbors}{N}
\newcommand{\Configuration}{X}
\newcommand{\RandomConfiguration}{\mathcal{X}}
\newcommand{\RandomYConfiguration}{\mathcal{Y}}
\newcommand{\RandomZConfiguration}{\mathcal{Z}}
\newcommand{\SeedSet}{S}

\newcommand{\fp}{\operatorname{fp}}
\newcommand{\ft}{\operatorname{T}}
\newcommand{\E}{\mathbb{E}}

\newcommand{\Temp}{\mathcal{T}}

\newcommand{\Paragraph}[1]{{\smallskip\noindent\bf #1}}

\newcommand{\PClass}{\ensuremath{\operatorname{\mathbf{P}}}}
\newcommand{\NP}{\ensuremath{\operatorname{\mathbf{NP}}}}

\newcommand{\Weight}{w}

\newcommand\numberthis{\addtocounter{equation}{1}\tag{\theequation}}

\newcommand{\cGX}{c(G,\SeedSet)}
\newcommand{\puv}{p_{u\to v}}
\newcommand{\pvu}{p_{v\to u}}
\newcommand{\contrib}{\delta}





\pdfinfo{/TemplateVersion (IJCAI.2022.0)}

\title{Invasion Dynamics in the Biased Voter Process}

\author{
Loke Durocher$^1$
\and
Panagiotis Karras$^1$
\and
Andreas Pavlogiannis$^1$\And
Josef Tkadlec$^2$\\
\affiliations
$^1$Aarhus University, Aabogade 34, Aarhus, Denmark\\
$^2$Harvard University, 1 Oxford Street, Cambridge, USA\\
\emails
\{panos,pavlogiannis\}@cs.au.dk,
tkadlec@math.harvard.edu
}
\renewcommand{\smallskip}{}

\begin{document}
\pagestyle{plain}
\maketitle
\begin{abstract}
The \emph{voter process} is a classic stochastic process that models the invasion of a mutant trait~$A$ (e.g., a new opinion, belief, legend, genetic mutation, magnetic spin) in a population of agents (e.g., people, genes, particles) who share a resident trait~$B$, spread over the nodes of a graph. An agent may adopt the trait of one of its neighbors at any time, while the \emph{invasion bias}~$r\in(0,\infty)$ quantifies the stochastic preference towards ($r>1$) or against ($r<1$) adopting~$A$ over~$B$. Success is measured in terms of the \emph{fixation probability}, i.e., the probability that eventually all agents have adopted the mutant trait~$A$. In this paper we study the problem of \emph{fixation probability maximization} under this model: given a budget~$k$, find a set of~$k$ agents to initiate the invasion that maximizes the fixation probability. We show that the problem is \NP-hard for both~$r>1$ and~$r<1$, while the latter case is also inapproximable within any multiplicative factor. On the positive side, we show that when~$r>1$, the optimization function is submodular and thus can be greedily approximated within a factor~$1-1/e$. An experimental evaluation of some proposed heuristics corroborates our results.
\end{abstract}
\section{Introduction}\label{sec:intro}

Invasion processes are a standard framework for modeling the spread of novel traits in populations. The population structure is represented as a graph, with nodes occupied by agents, while the edge relation captures some sort of agent connection, such as spatial proximity, friendship, direct communication, or particle interaction. The invasion dynamics define the rules of trait diffusion from an agent to its neighbors. These dynamics raise a natural optimization problem that calls to choose a set of at most~$k$ initial trait-holders that maximizes a goal involving the eventual spread of the trait in the population. This type of problem has been studied extensively with a focus on influence spread on social network dynamics~\cite{Domingos2001,Kempe2003,Mossel2007}; see~\cite{IM-Survey} for a survey.

The \emph{voter process} is a fundamental and natural probabilistic diffusion model that arises in evolutionary population dynamics, where the diffused trait is a mutation and the initial trait-holders are mutants. Also known as imitation updating in evolutionary game theory~\cite{Ohtsuki2006}, it was initially introduced to study particle interactions~\cite{Liggett1985} and territorial conflict~\cite{Clifford1973}; since then, it has found numerous applications in the spread of genetic mutations (the death-birth Moran process)~\cite{Hindersin2015,Tkadlec2020}, social dynamics~\cite{Castellano2009}, robot coordination and swarm intelligence~\cite{Talamali2021}. In the long run, the process leads to a \emph{consensus state}~\cite{EvenDar2007}, in which the mutant trait is either fixated (i.e., adopted by all agents), or gone extinct. 
The calibre of the invasion is quantified in terms of its \emph{fixation probability}~\cite{Kaveh2015,brendborg2022fixation}; thus, the goal of optimization in this context is to maximize the probability that the mutant trait spreads in the whole population. Unlike other models of spread, the voter process accounts for \emph{active} resistance to the invasion:~an agent carrying the mutant trait may not only forward it to its neighbors, but also lose it due to exposure to the resident trait from its neighbors. This model exemplifies settings such as individuals that may switch opinions, magnetic spins that may undergo transient states, and sub-species that compete in a gene pool.

In this paper, we study the problem of \emph{fixation probability maximization} in the context of the \emph{biased voter process}:~find a set of~$k$ agents initially having a \emph{mutant trait}~$A$ that maximizes the fixation probability of~$A$ in a population with a preexisting \emph{resident trait}~$B$. The \emph{invasion bias}~$r\in(0,\infty)$ quantifies the stochastic preference of each agent towards~($r>1$) or against~($r<1$) adopting~$A$ over~$B$. Despite extensive interest in this type of optimization problem and the appeal and generality of the voter process, to our knowledge, this problem in the context of the voter process has only been studied in the restricted case~$r=1$, which admits a simple polynomial-time algorithm~\cite{EvenDar2007}. Yet biased invasions naturally occur in most settings:~for genetic mutations, $r$~captures the fitness (dis)advantage conferred by the new mutation~\cite{Allen2020b}; for magnetic spins, $r$~reveals the presence of a magnetic field~\cite{Antal2006}; for social traits, $r$~accounts for the societal predisposition towards a new idea~\cite{Bhat2019}.

Our contributions can be summarized as follows:

\begin{compactenum}
\item We show that computing the fixation probability on undirected graphs admits a fully polynomial-time approximation scheme (FPRAS).
\item We show that, notwithstanding the neutral case where~$r=1$, fixation maximization is $\NP$-hard both for~$r>1$ and for~$r<1$, while the latter case is also inapproximable within any multiplicative factor.
\item On the positive side, we show that, when~$r>1$, the optimization function is submodular and thus can be greedily approximated within a factor~$1-\sfrac{1}{e}$.
\end{compactenum}
\section{Preliminaries}\label{sec:preliminaries}

\paragraph{Structured populations.} Following the literature standard, we represent a population structure by a (generally, directed and weighted) graph~$G=(V,E, \Weight)$ of~$n=|V|$ nodes. 
Each node stands for a location in some space, while a population of~$n$ agents is spread on~$G$ with one agent per node. 
Given a node~$u$, we denote by~$\Neighbors(u)=\{v\in V\colon (v,u)\in E \}$ the \emph{neighbors} of~$u$, and by~$\deg(u)=|\Neighbors(u)|$ its \emph{degree}. 
The weight function~$\Weight\colon E\to \Reals_{\geq 0}$ maps every edge to a non-negative number. 
As a convention, we write~$\Weight(v,u)=0$ if~$(v,u)\not \in E$. 
We consider graphs for which the sub-graph induced by the support of~$\Weight$ is strongly connected --- which is necessary for the voter process to be well-defined. 
We say that~$G$ is \emph{undirected} if~$E$ is symmetric and irreflexive, and~$\Weight(v,u)=1$ for all~$(v,u)\in E$. 
For simplicity, when~$G$ is undirected we denote it by~$G=(V,E)$.

\paragraph{The voter invasion process.} The classic voter process models the invasion of a mutant \emph{trait}~$A$ into a homogeneous population~\cite{Clifford1973,Liggett1985}. The process distinguishes between two types of agents, \emph{residents}, who carry a resident trait~$B$, and \emph{mutants}, who carry the mutant trait~$A$. A \emph{seed set}~$\SeedSet\subseteq V$ defines the agents that are initially mutants. The \emph{invasion bias}~$r>0$ is a real-valued parameter that defines the relative competitive advantage of mutants in this invasion. When~$r>1$ (resp.,~$r<1$), the mutants have a competitive advantage (resp., competitive disadvantage), and the process is biased towards (resp., against) invasion, while~$r=1$ yields the neutral case.
A \emph{configuration}~$\Configuration\subseteq V$ defines the set of mutants at a given time.
The \emph{fitness} of the agent occupying a node~$u \in V$ is defined as
\[
\fit_{\Configuration}(u)=
\begin{cases}
r, & \text{ if }u\in \Configuration\\
1, & \text{ otherwise.}
\end{cases}
\]

The discrete-time \emph{voter invasion process}~\cite{Antal2006,EvenDar2007}, a.k.a. \emph{Moran death-birth process}~\cite{Tkadlec2020,Allen2020b}, with seed set~$\SeedSet$ is a stochastic process $(\RandomConfiguration_i)_{i\geq 0}$, where~$\RandomConfiguration_i\subseteq V$ is a random \emph{configuration}. 
Initially, $\RandomConfiguration_0=\SeedSet$, and,
given the current configuration~$\Configuration$, we obtain
$\RandomConfiguration_{i+1}$ in two stochastic steps.
\begin{compactenum}
\item \emph{(Death event):} a focal agent on some node~$u$ is chosen with uniform probability~$\sfrac{1}{n}$.
\item \emph{(Birth event):} the agent on~$u$ adopts the trait of one of its neighbors, on $v$, with probability:
\[
\frac{\fit_{\Configuration}(v)\cdot \Weight(v,u)}{\sum_{x\in \Neighbors(u)}\fit_{\Configuration}(x)\cdot \Weight(x,u)} .
\]
\end{compactenum}
Note that the focal agent may assume its pre-existing trait, and generally, the mutant set may grow or shrink in each step.
We often identify agents with the nodes they occupy and use expressions such as ``node~$v$ reproduces onto~$u$'' to indicate that the agent occupying node~$u$ adopts the trait of the agent at node~$v$. See~\cref{fig:voter} for an illustration.

\begin{figure}[!ht]
\includegraphics[width=0.95\linewidth]{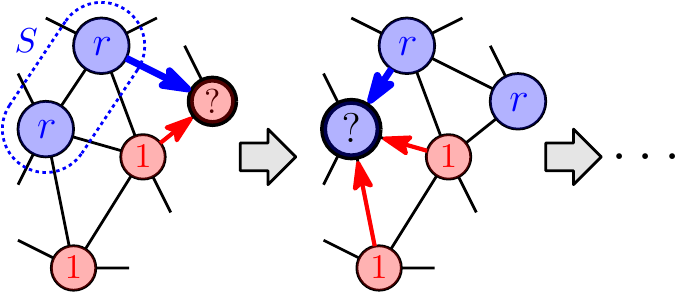}
\caption{Initially, each node is occupied by a resident (red), except for a seed set~$S$ of nodes occupied by mutants (blue). In each step of the voter process with bias~$r$, one random node adopts the type of one of its neighbors, with mutants having a relative (dis)advantage~$r$.
}\label{fig:voter}
\end{figure}

\paragraph{Fixation probability.} As~$G$ is strongly connected, one type will spread to all nodes in finitely many steps with probability~1. The mutant invasion succeeds if the mutant trait spreads, an event called \emph{fixation}. Given a graph~$G$, a bias~$r$, and an initial seed set~$\SeedSet$, the \emph{fixation probability}, denoted as~$\fp^G_r(\SeedSet)$, is the probability that the mutants starting from~$\SeedSet$ eventually fixate on~$G$. See \cref{fig:twostar} for an illustration.

\begin{figure}[!ht]
\includegraphics[width=0.95\linewidth]{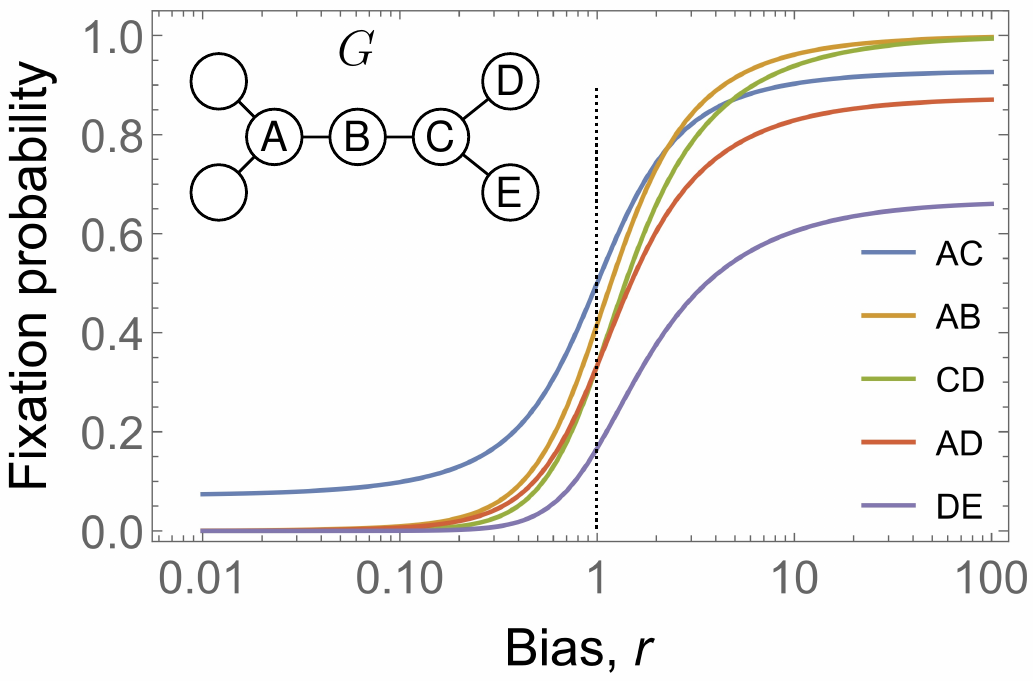}
\caption{A graph~$G=(V,E)$ of~7 nodes and fixation probability~$\fp^G_r(\SeedSet)$ for~5 different subsets~$\SeedSet$ of~2 nodes vs. bias~$r\in[0.01,100]$. We have $\fp^G_{r\to 0}(\SeedSet)\to 0$, unless $\SeedSet$ forms a vertex cover (\cref{lem:negative_bias_vc}) and $\fp^G_{r\to \infty}(\SeedSet)\to 1$ iff $\SeedSet$ contains two \mbox{adjacent}~nodes.
}\label{fig:twostar}
\end{figure}

Whereas the computational complexity of determining~$\fp^G_r(\SeedSet)$ is unknown, the function is approximately computable by a simulation of the invasion process. In the next section we will show that when~$G$ is undirected, such a simulation yields an efficient approximation scheme.

\paragraph{Fixation maximization.} We formulate and study the optimization problem of selecting a seed set that leads to a most successful invasion, i.e., maximizes the fixation probability, subject to cardinality constraints. 
Formally, given a graph~$G$, a bias~$r$ and a budget~$k$, the task is to compute the set
\[
\SeedSet^*=\argmax_{\SeedSet\subseteq V, |\SeedSet|= k} \fp^G_{r}(\SeedSet)
\numberthis\label{eq:optimization}
\]
The corresponding decision question is: Given a budget~$k$ and a threshold~$\alpha\in[0,1]$ determine whether there exists a seed set~$\SeedSet$ with~$|\SeedSet|=k$ such that~$\fp^G_{r}(\SeedSet) \geq \alpha$. As~\cref{fig:twostar} illustrates, different seed sets yield considerably different fixation probabilities, while the optimal choice varies depending on~$r$, hence the intricacy of the problem.

\section{Computing the Fixation Probability}\label{sec:fpras}

For~$r=1$, the fixation probability is additive~\cite{Broom2010}. Thus, given a graph~$G=(V,E,\Weight)$ and a budget~$k$, the set~$\SeedSet^*$ that maximizes~$\fp^G_{r=1}(\SeedSet)$ is obtained by selecting the~$k$ largest values from the vector~$\langle \fp^G_{r=1}(\{v\}) \mid v\in V \rangle$, determined by solving a system of~$n$ linear equations~\cite{allen2015molecular}; thus the problem can be solved in polynomial time. 
When~$G$ is undirected, that system admits a closed-form solution~\cite{Broom2010,Maciejewski2014}:
$\fp^G_{r=1}(\{v\}) = \sfrac{\deg(v)}{2|E|}$.


For~$r\ne 1$, the complexity of determining~$\fp^G_r(\SeedSet)$ is open. Here we show that, for undirected graphs, it can be efficiently approximated by a fully-polynomial randomized approximation scheme (FPRAS). 
The key component of our result is a polynomial upper bound on the expected number of steps~$\ft^G_r(\SeedSet)$ until one of the types fixates;
an analogous approach has been used for the so-called Moran Birth-death process~\cite{Diaz2014}.

\begin{restatable}{theorem}{thmubtime}\label{thm:ub-time}
Let~$G=(V,E)$ be an undirected graph with~$|V|=n$ nodes, $|E|=m$ edges, and maximum degree~$\Delta$. Let~$\SeedSet\subseteq V$ be the initial seed set and~$r\ne 1$. Then~$\ft^G_r(\SeedSet)=\bigO(\Delta nm) =\bigO(n^4)$.
\end{restatable}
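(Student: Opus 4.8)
The plan is to find a potential function $\phi$ on configurations such that in each step of the voter process, $\phi$ changes by a controlled amount whenever the configuration is non-absorbing, and then to apply a standard drift/martingale argument. The natural candidate is a weighted sum over mutant nodes: $\phi(\Configuration) = \sum_{u \in \Configuration} c(u)$ for suitable positive weights $c(u)$. For the neutral case $r=1$ the classical choice is $c(u) = \deg(u)/(2m)$, which makes $\phi$ a martingale (this is exactly the closed-form fixation probability recalled above). For $r \ne 1$ the process has a drift, so $\phi$ will no longer be a martingale, but the key point is that it will still move by a non-negligible amount in expectation at every non-absorbing step, and that suffices: on a finite state space with two absorbing states ($\emptyset$ and $V$), if $\E[\phi(\RandomConfiguration_{i+1}) - \phi(\RandomConfiguration_i) \mid \RandomConfiguration_i = \Configuration]$ has absolute value at least some $\eps > 0$ whenever $\Configuration \notin \{\emptyset, V\}$, and $\phi$ stays in a bounded range of width $W$, then the expected absorption time is $\bigO(W^2/\eps^2)$ by a standard supermartingale argument (or, when the drift has a fixed sign as here, even $\bigO(W/\eps)$).

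First I would compute the one-step change of $\phi$ conditioned on the current configuration $\Configuration$. Fix the focal node $u$ chosen with probability $1/n$. The contribution of $u$ to $\phi$ flips from $c(u)$ to $0$ (if $u \in \Configuration$ adopts a resident neighbor) or from $0$ to $c(u)$ (if $u \notin \Configuration$ adopts a mutant neighbor); otherwise $u$'s contribution is unchanged. Writing $d_A(u)$ and $d_B(u)$ for the number of mutant and resident neighbors of $u$ and using the birth probabilities, the expected change restricted to the event that $u$ is the focal node is
\[
\frac{c(u)}{n}\left( \mathbf{1}[u \notin \Configuration]\frac{r\,d_A(u)}{r\,d_A(u) + d_B(u)} - \mathbf{1}[u \in \Configuration]\frac{d_B(u)}{r\,d_A(u)+d_B(u)} \right).
\]
Summing over $u$ gives $\E[\phi(\RandomConfiguration_{i+1}) - \phi(\RandomConfiguration_i)\mid \RandomConfiguration_i = \Configuration]$. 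With the choice $c(u) = \deg(u)$ (so $W = \bigO(m\Delta) = \bigO(\Delta n^2)$, rescaling by $2m$ at the end is irrelevant), one checks that the positive (resp. negative) terms dominate when $r>1$ (resp. $r<1$): every edge between a mutant and a resident, say $\{u,v\}$ with $u\in\Configuration$, $v\notin\Configuration$, contributes with a consistent sign to the total drift, and since $\Configuration$ is non-absorbing there is at least one such cut edge. The crux of the estimate is to lower-bound the magnitude of the net contribution of a single cut edge: the denominators $r\,d_A(x)+d_B(x)$ are at most $\max(r,1)\Delta$, the numerators are at least $\min(r,1)$, and $c(u),c(v)\ge 1$, so each cut edge contributes at least $\Omega(|r-1|/(n\Delta))$ in absolute value to the drift after the $r=1$ terms cancel. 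Hence $\eps = \Omega(|r-1|/(n\Delta))$, a constant once $r$ is fixed, and the bound $\ft^G_r(\SeedSet) = \bigO(W/\eps) = \bigO(\Delta n^2 \cdot n\Delta) = \bigO(n^4)$ follows, matching the claim $\bigO(\Delta n m)$ up to the way $m$ and $\Delta$ are traded.

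I expect the main obstacle to be making the single-cut-edge lower bound on the drift fully rigorous: one must argue that the contributions of non-cut behavior and the exact cancellation of the $r=1$ part leave a residual of the right sign and magnitude, uniformly over all non-absorbing $\Configuration$, rather than having the positive and negative residuals nearly cancel. Concretely, one should pair up terms edge-by-edge: for a directed pair $(v,u)$ with $v\in\Configuration,u\notin\Configuration$ the relevant quantity is $\deg(u)\cdot\frac{r\,d_A(u)}{r\,d_A(u)+d_B(u)} - \deg(u)\cdot\frac{d_A(u)}{\deg(u)}$ (the second being the neutral value that would give a martingale), and this difference has the sign of $r-1$ and is bounded below in absolute value by something like $d_A(u)d_B(u)|r-1|/(\deg(u)\max(r,1))\ge |r-1|/(\Delta\max(r,1))$ whenever $d_A(u)\ge 1$; a symmetric statement handles $u\in\Configuration$ losing its trait. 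Summing these signed edge contributions over all of $V$ and noting the neutral parts telescope to zero yields the uniform drift bound. The secondary technical point is to confirm absorption happens almost surely with finite expectation before invoking the drift lemma, which is immediate from strong connectedness and the fact that from any configuration there is a positive-probability path of length $\le n$ to an absorbing state.
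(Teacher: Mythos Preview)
Your approach is the same as the paper's: potential $\phi(X)=\sum_{v\in X}\deg(v)$ and a drift bound. The gap is in your uniform lower bound on the drift. If you carry out your own per-node computation cleanly, the residual after subtracting the neutral ($r=1$) part is
\[
\E[\phi(\RandomConfiguration_{i+1})-\phi(\RandomConfiguration_i)\mid \RandomConfiguration_i=X]
=\frac{r-1}{n}\sum_{u\in V}\frac{d_A(u)\,d_B(u)}{r\,d_A(u)+d_B(u)},
\]
which is nonnegative for $r>1$ but equals~$0$ whenever every node has a monochromatic neighborhood, i.e., when $G$ is bipartite and $X$ is one side of the bipartition. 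Your claimed lower bound $d_A(u)d_B(u)|r-1|/(\deg(u)\max(r,1))\ge |r-1|/(\Delta\max(r,1))$ ``whenever $d_A(u)\ge 1$'' is false: it also needs $d_B(u)\ge 1$, and in the bad bipartite configuration no node satisfies both. So the drift is \emph{not} uniformly bounded away from zero over all non-absorbing configurations, and your $W/\eps$ conclusion does not follow as stated.

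The paper's proof isolates exactly this obstruction. It calls a configuration \emph{bad} if every edge joins opposite types (there are at most two such configurations, only when $G$ is bipartite), and shows two things: from a bad configuration the next step is deterministically good with zero expected change in $\phi$; from a good configuration some cut edge has an endpoint with a same-type neighbor, and that edge alone gives drift at least $(r-1)/(r\Delta n)$. Amortizing over two consecutive steps then gives the stated drift and the bound $\ft\le 2\cdot 2m/s=\bigO(\Delta nm)$. Your argument becomes correct once you add this good/bad case split. A minor side point: the range of $\phi$ is $W=2m$, not $\bigO(m\Delta)$; with the correct $W$ you recover exactly $\bigO(\Delta nm)$ rather than only $\bigO(n^4)$.
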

\begin{myproof}
Without loss of generality, suppose~$r>1$ (otherwise we swap the role of mutants and residents).
Given a current configuration~$X$, we define a potential~$\phi(X)\colon 2^V\to \R$ as~$\phi(X)=\sum_{v\in X} \deg(v)$. 
We show that, as we run the process, $\phi(X)$ increases in a controlled manner.
To that end, we distinguish two types of configurations: we say that a configuration is \textit{bad} if every edge~$uv\in E$ connects nodes of different types, and \textit{good} otherwise. Note that bad configurations exist iff~$G$ is bipartite, in which case there are exactly two of them. In~\cref{sec:appendix}, we show that:
\begin{compactenum}
\item When at a bad configuration, the next configuration is good and the potential does not change in expectation.
\item When at a good configuration, in one step the potential increases by at least~$s=(r-1)/(r\Delta n)$ in expectation.
\end{compactenum}
All in all, this implies that, in expectation, the potential increases by at least~$s$ over any two consecutive steps (until fixation occurs). Since at all times the potential is non-negative and upper-bounded by~$U=\sum_{v\in V}\deg(v)=2m$, by a standard drift theorem for submartingales~\cite{He2001,lengler2020drift} the expected number of steps is at most:
\[ 
\ft^G_r(\SeedSet) \le 2\frac{U}{s} = \frac{4r}{r-1} \Delta nm.\qedhere
\]
\end{myproof}


\cref{thm:ub-time} yields the following corollary based on the Monte Carlo method.


\begin{corollary}\label{thm:fpras}
When~$G=(V,E)$ is undirected, $\SeedSet\subseteq V$, the function~$\fp^G_r(\SeedSet)$ admits a FPRAS for any~$r$.
\end{corollary}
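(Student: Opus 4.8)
The plan is to combine \cref{thm:ub-time} with a standard Monte Carlo estimator and a Chernoff-type concentration bound. The target quantity $p := \fp^G_r(\SeedSet)$ is the probability that a single run of the voter process starting from $\SeedSet$ ends in the all-mutant consensus state. Since the process reaches a consensus state with probability $1$ and, by \cref{thm:ub-time}, does so in expected time $\ft^G_r(\SeedSet) = \bigO(n^4)$, we can faithfully simulate one run in polynomial time: repeatedly pick a uniform node, then sample the adopted trait according to the birth rule, stopping when $\RandomConfiguration_i = \emptyset$ or $\RandomConfiguration_i = V$. Each such run yields a Bernoulli$(p)$ sample (output $1$ if it ends at $V$, else $0$). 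Averaging $N$ independent runs gives the estimator $\widehat p = \frac{1}{N}\sum_{j=1}^N Z_j$.

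The key steps, in order, are as follows. First, argue that one simulated run takes polynomial time \emph{in expectation} (via \cref{thm:ub-time}) and — to get a worst-case polynomial bound — truncate each run at, say, $T := \lceil n \cdot \ft^G_r(\SeedSet)\rceil$ steps; by Markov's inequality the probability that an untruncated run would exceed $T$ is at most $1/n$, so truncation perturbs $p$ by at most an additive $1/n$, which we can absorb (and in fact drive down by truncating at $T := \lceil C \cdot \ft^G_r(\SeedSet) \cdot \log(1/(\eps p)) \rceil$ or similar, or simply note that an FPRAS need only be polynomial in $1/\eps$ so a polynomial truncation suffices after a standard amplification argument). Second, set $N = \Theta(\eps^{-2} p^{-1} \log \delta^{-1})$ and apply a multiplicative Chernoff bound to conclude $\Pr[\,|\widehat p - p| > \eps p\,] \le \delta$. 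Third, observe that on an undirected graph $p$ is not too small — indeed for $r \ge 1$ a single mutant already fixates with probability at least $\deg(v)/(2m) \ge 1/(2m)$ (by the $r=1$ closed form and monotonicity in $r$), and for $r < 1$ one swaps the roles of the two types and estimates $1-p$ instead when $p$ is close to $1$, or handles the genuinely tiny-$p$ regime by the inapproximability-avoiding observation that an FPRAS is allowed running time polynomial in $1/p$ only if $p$ is polynomially bounded; more cleanly, run the estimator for both $p$ and $1-p$ and report whichever is well-estimated. This makes $N$ polynomial in $n$, $1/\eps$, and $\log(1/\delta)$. Finally, multiply: total running time is $N \cdot T \cdot \text{poly}(n) = \text{poly}(n, 1/\eps, \log(1/\delta))$, which is exactly the FPRAS guarantee.

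The main obstacle is the lower bound on $p$ (equivalently, controlling the sample complexity $N \propto 1/p$). A multiplicative approximation is only cheap when $p$ is not exponentially small, and a priori there is no reason the optimal-seed fixation probability is bounded below by an inverse polynomial for arbitrary $r < 1$ — indeed the abstract advertises inapproximability for $r<1$, so one must be careful that the FPRAS claim here is about \emph{computing} $\fp^G_r(\SeedSet)$ for a \emph{given} $\SeedSet$, not about the maximization problem. The resolution is the symmetry of the voter process between the two traits: running the process "backwards" on the complement configuration shows that the residents fixate with probability $1-p$, and the same time bound \cref{thm:ub-time} applies to that dual process; so at least one of $p$, $1-p$ is $\ge 1/2$, and whichever is the smaller of the two, being at least $1/(2m)$ when it corresponds to a nonempty seed set of that trait, is polynomially bounded below. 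One estimates both $p$ and $1-p$ to relative error $\eps$ and returns the estimate of $p$ obtained either directly (if $\widehat p \ge 1/4$, say) or as $1 - \widehat{(1-p)}$ (if $\widehat p < 1/4$, in which case $1-p \ge 1/4 - \eps p$ is large and cheaply estimated, while the absolute error on $1-p$ translates to a relative error on $p$ that can be controlled since $p \ge 1/(2m)$). The remaining steps — truncation and the Chernoff bound — are entirely routine.
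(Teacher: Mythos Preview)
Your overall approach---simulate to consensus using the polynomial absorption-time bound of \cref{thm:ub-time}, then average i.i.d.\ Bernoulli samples and apply a Chernoff bound---is exactly what the paper has in mind; the paper simply states the corollary ``based on the Monte Carlo method'' and points to~\cite{Diaz2014} for the analogous argument. For $r\ge 1$ your sketch is fine: monotonicity (in $r$ and in $\SeedSet$) together with the neutral closed form $\fp^G_{1}(\{v\})=\deg(v)/(2m)$ gives $p\ge 1/(2m)$, so $N=\Theta(\eps^{-2}p^{-1}\log\delta^{-1})$ is polynomial.

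The genuine gap is your treatment of $r<1$. You assert at the end that ``$p\ge 1/(2m)$'' in the small-$p$ branch, but this is false: on the complete graph $K_n$ with a single initial mutant and any fixed $r<1$, the death--birth chain on the mutant count has forward/backward ratio $q_i/p_i\approx 1/r$, and the gambler's-ruin formula gives $\fp^{K_n}_r(\{v\})=\Theta\big((1-r)\,r^{\,n-1}\big)$, which is exponentially small in $n$. Your fallback of estimating $1-p$ to relative error $\eps$ and returning $1-\widehat{(1-p)}$ does not repair this: a $(1\pm\eps)$-estimate of $1-p$ yields only an \emph{additive} $\eps(1-p)$ error on $p$, hence relative error $\Theta(\eps/p)$, which is useless when $p$ is exponentially small. (Also, an FPRAS is not permitted running time polynomial in $1/p$; it must be polynomial in the input size and $1/\eps$.) The resolution adopted in~\cite{Diaz2014}, and implicitly here, is that for $r<1$ the Monte Carlo scheme yields an FPRAS for the \emph{extinction} probability $1-\fp^G_r(\SeedSet)$ (equivalently, by role-swap, for $\fp^G_{1/r}(V\setminus\SeedSet)$ with $1/r>1$), not for $\fp^G_r(\SeedSet)$ itself; the corollary should be read in that sense.
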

\section{Invasion with Favorable Bias}\label{sec:positive_bias}

In this section we study the voter process for~$r>1$. We show that the problem is \NP-hard (\cref{subsec:opt-hardness}) but the fixation probability is monotone and submodular and can thus be efficiently approximated (\cref{subsec:submodularity}).

\subsection{Hardness of Optimization}\label{subsec:opt-hardness}

Here we consider the weakly biased process, i.e., the case~$r = 1 + \eps$ for small~$\eps > 0$. Since~$\fp^G_{r}(\SeedSet)$ is a continuous and smooth function of~$r$, its Taylor expansion around~$r = 1$ is:
\[
\fp^G_{r=1+\eps}(\SeedSet) = \fp^G_{r=1}(\SeedSet) + \eps\cdot \cGX +\bigO(\eps^2),
\numberthis\label{eq:taylor}
\]
where~$\cGX$ is a constant independent of~$\eps$. Recall that 
$\fp^G_{r=1}(\SeedSet) = \sum_{v\in \SeedSet} \sfrac{\deg(v)}{2|E|}$
~\cite{Broom2010,Maciejewski2014}.
For~$d$-regular graphs, this yields~$\fp^G_{r=1} (\SeedSet) = \sfrac{d|S|}{2|E|}$, hence the first term in~\cref{eq:taylor} is constant across all equal-sized seed sets~$\SeedSet$. We thus have the following lemma.

\begin{restatable}{lemma}{lemmaxder}\label{lem:max_der}
For any undirected regular graph~$G$, there exists a small constant~$\eps>0$ such that:
\[
\argmax_{S\subseteq V, |S|= k} \fp^G_{r=1+\eps}(\SeedSet)  \subseteq \argmax_{S\subseteq V, |S|= k} \cGX .
\]
\end{restatable}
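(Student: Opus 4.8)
The plan is to show that for regular graphs the Taylor expansion in \cref{eq:taylor} ``reduces'' the optimization of $\fp^G_{r=1+\eps}$ to the optimization of the first-order coefficient $\cGX$, provided $\eps$ is chosen small enough. Concretely, write $g(\SeedSet,\eps) := \fp^G_{r=1+\eps}(\SeedSet)$. Since $G$ is $d$-regular, the zeroth-order term $\fp^G_{r=1}(\SeedSet) = \sfrac{d|\SeedSet|}{2|E|}$ is a constant $c_0$ that depends only on $k=|\SeedSet|$, so for all seed sets of size $k$ we have $g(\SeedSet,\eps) = c_0 + \eps\cdot\cGX + R(\SeedSet,\eps)$ where the remainder satisfies $|R(\SeedSet,\eps)| \le C\eps^2$ for some constant $C$ depending on $G$ (and $k$, $r$-range) but not on $\SeedSet$ — this uniformity is exactly what smoothness of $\fp$ in $r$ on a compact interval, together with finiteness of the number of seed sets, buys us.

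Next I would argue by a gap/separation argument. Let $M := \max_{|\SeedSet|=k}\cGX$ and let $\SeedSet^\star$ be any maximizer of $\cGX$. Partition the seed sets of size $k$ into those with $\cGX = M$ and those with $\cGX < M$; in the latter class let $M' < M$ be the largest value of $\cGX$ attained (if the latter class is empty there is nothing to prove, since then $\cGX$ is constant and the claim is trivial). Set $\gamma := M - M' > 0$. Now for any seed set $T$ with $\cGX[{(G,T)}] \le M'$ and the fixed maximizer $\SeedSet^\star$, compute
\[
g(\SeedSet^\star,\eps) - g(T,\eps) = \eps\bigl(\cGX[{(G,\SeedSet^\star)}] - \cGX[{(G,T)}]\bigr) + R(\SeedSet^\star,\eps) - R(T,\eps) \ge \eps\gamma - 2C\eps^2 = \eps(\gamma - 2C\eps).
\]
Choosing $\eps < \gamma/(2C)$ makes this strictly positive, so no seed set $T$ outside $\argmax_{|\SeedSet|=k}\cGX$ can be a maximizer of $g(\cdot,\eps)$; hence every maximizer of $\fp^G_{r=1+\eps}$ lies in $\argmax_{|\SeedSet|=k}\cGX$, which is the claim. (Note the chosen $\eps$ depends only on $G$ and $k$, consistent with the statement.)

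The one genuinely delicate point — the ``main obstacle'' — is justifying the \emph{uniform} quadratic remainder bound $|R(\SeedSet,\eps)| \le C\eps^2$: the statement of \cref{eq:taylor} as written only asserts $\fp^G_{r=1+\eps}(\SeedSet) = \fp^G_{r=1}(\SeedSet) + \eps\cGX + \bigO(\eps^2)$, and one must be sure the hidden constant can be taken independent of $\SeedSet$. This follows because $\fp^G_r(\SeedSet)$, as a function of $r$, is a ratio of polynomials in $r$ with no pole in a neighborhood of $r=1$ (it is the solution of a finite linear system whose coefficients are polynomial in $r$, and the fixation probability is bounded in $[0,1]$), so it is real-analytic near $r=1$; its second derivative is therefore continuous, hence bounded on a compact interval $[1,1+\eps_0]$, and since there are only finitely many seed sets of size $k$ we may take $C$ to be the maximum over all of them of $\tfrac12\sup_{r\in[1,1+\eps_0]}|\partial_r^2 \fp^G_r(\SeedSet)|$. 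Everything else in the argument is the elementary gap-chasing above; alternatively one can phrase the whole thing as: the function $\SeedSet \mapsto \argmax$ of a function that converges (after rescaling) to $\cGX$ eventually lands inside $\argmax \cGX$, which is a standard ``for small enough perturbation the argmax set only shrinks'' fact on a finite domain.
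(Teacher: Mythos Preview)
Your argument is correct and is essentially the same approach the paper takes: the paper does not give a separate proof of this lemma but treats it as an immediate consequence of the preceding paragraph (constant zeroth-order term on regular graphs together with the Taylor expansion \cref{eq:taylor}), and your write-up is precisely the careful fleshing-out of that reasoning, including the finite-domain gap argument and the uniform remainder bound that the paper leaves implicit.
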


In light of~\cref{lem:max_der}, we establish the hardness of fixation maximization by computing~$\cGX$ analytically on regular graphs, and showing that maximizing it is \NP-hard. 
McAvoy and Allen~\shortcite{Mcavoy2021} recently showed that, given~$G$ and~$\SeedSet$, the constant~$\cGX$ can be defined through a system of~$n^2$ linear equations. Here we determine~$\cGX$ explicitly for regular graphs.
For each node~$v\in V$, let~$x_v=1$ iff $v\in \SeedSet$. First, we restate~\cite[Proposition~2]{Mcavoy2021} in the special case of~$d$-regular graphs and adapt it to our notation.

\begin{lemma}\label{lem:mcavoy}
Fix an undirected~$d$-regular graph~$G=(V,E)$ on~$n$ nodes and an initial configuration~$\SeedSet\subseteq V$. Then
\[
\cGX=\frac1{2d^2n^2} \sum_{u,v\in V} |\Neighbors(u)\cap \Neighbors(v)|\cdot  \eta_{uv},
\numberthis\label{eq:mcavoy_cgx}
\]
where~$\eta_{uv}$ comprise the unique solution to the linear system
\[
\eta_{uv}=\begin{cases}
\frac n2|x_u-x_v|           + \frac1{2d} \sum_{w\in \Neighbors(u)} \eta_{wv} &\\
\phantom{\frac n2|x_u-x_v|} + \frac1{2d} \sum_{w\in \Neighbors(v)} \eta_{uw} &\text{if $u\ne v$,}\\
0 &\text{if $u=v$.}
\numberthis\label{eq:mcavoy_system}
\end{cases}
\]
\end{lemma}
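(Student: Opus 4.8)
The plan is to obtain \cref{lem:mcavoy} by specializing McAvoy and Allen's general weak-selection formula \cite[Proposition~2]{Mcavoy2021} to an unweighted $d$-regular graph. First I would recall their statement in its native form: for the death-birth process on a weighted graph, let $p_{uv}=\Weight(v,u)/\sum_{x}\Weight(x,u)$ be the one-step random-walk kernel and let $\pi_v$ be the neutral reproductive value of node~$v$ (the probability that the lineage surviving neutral drift is the one founded at~$v$, normalized so $\sum_v\pi_v=1$). Their result expresses $\cGX$ as a normalized double sum of the form $\sum_{u,v\in V}\pi_u\,p^{(2)}_{uv}\,\eta_{uv}$, where $p^{(2)}$ is the two-step kernel and the quantities $\eta_{uv}$ are the unique bounded solution of a coalescence-type linear system: $\eta_{uu}=0$ and, for $u\ne v$, $\eta_{uv}$ is a term proportional to the disagreement indicator $|x_u-x_v|$ plus $\tfrac12\sum_{w}p_{uw}\eta_{wv}+\tfrac12\sum_{w}p_{vw}\eta_{uw}$. (The latter can be read as: $\eta_{uv}$ is the expected total disagreement accumulated by a pair of lineages started at $u$ and $v$, of which a uniformly chosen one performs one random-walk step per round, until the two lineages meet.)

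Next I would substitute the regular, unweighted data. Every edge has unit weight, so $p_{uv}=\sfrac1d$ for $v\in\Neighbors(u)$; hence $p^{(2)}_{uv}=\sum_{w}p_{uw}p_{wv}=|\Neighbors(u)\cap\Neighbors(v)|/d^{2}$; and, since the neutral random walk is now doubly stochastic, every node has the same reproductive value $\pi_v=\sfrac1n$ (consistent with $\fp^G_{r=1}(\{v\})=\deg(v)/2|E|$ specializing to $\sfrac1n$). Plugging $p_{uw}=\sfrac1d$ into the homogeneous part turns each coefficient $\tfrac12 p_{uw}$ into $\sfrac1{2d}$ and, after inserting the constant reproductive-value weights into the disagreement term, its coefficient becomes $\sfrac n2$ --- this is exactly \cref{eq:mcavoy_system}. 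Plugging $\pi_u=\sfrac1n$ and $p^{(2)}_{uv}=|\Neighbors(u)\cap\Neighbors(v)|/d^{2}$ into the outer sum, and collecting the remaining numeric constant from their formula, pulls out the prefactor $\sfrac1{2d^{2}n^{2}}$ in front of $\sum_{u,v\in V}|\Neighbors(u)\cap\Neighbors(v)|\,\eta_{uv}$ --- this is \cref{eq:mcavoy_cgx}.

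The main obstacle is lining up the two modeling conventions rather than any analysis. McAvoy and Allen parametrize selection through a game payoff of intensity $\delta$ (and in places an exponential fitness map), whereas here the mutant fitness is the linear value $r=1+\eps$; one must check that assigning payoff $1$ to a mutant and $0$ to a resident makes $\delta$ and $\eps$ agree to first order in the Taylor expansion \cref{eq:taylor}, and that their death-birth update (uniform death, then a neighbor reproduces with probability proportional to its fitness times edge weight) is verbatim the process of \cref{sec:preliminaries}. A secondary check is that the specialized $\eta$-system has a unique bounded solution: since exactly one of the two lineages moves per round, bipartiteness of $G$ introduces no parity obstruction, and strong connectivity makes the pair chain absorbed on the diagonal in finite expected time. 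Once these identifications are in place the lemma follows by the routine substitution above.
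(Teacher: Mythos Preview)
Your approach matches the paper exactly: the paper does not prove \cref{lem:mcavoy} at all but simply presents it as a restatement of \cite[Proposition~2]{Mcavoy2021} specialized to $d$-regular graphs, with no further argument. Your proposal spells out that specialization (uniform step kernel $p_{uv}=1/d$, two-step kernel $|\Neighbors(u)\cap\Neighbors(v)|/d^2$, uniform reproductive values $\pi_v=1/n$, and the resulting constants $n/2$, $1/(2d)$, $1/(2d^2n^2)$) in more detail than the paper itself provides, so there is nothing to correct.
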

Intuitively, each~$\eta_{uv}$ expresses the expected number of steps in which nodes~$u$ and~$v$ are occupied by a mutant and resident, respectively. We say that an edge~$e\in E$ is \textit{active} if its endpoints are occupied by heterotypic agents. The following lemma establishes that, on regular graphs, $\cGX$ is decreasing in the number of active edges induced by~$\SeedSet$.

\begin{restatable}{lemma}{lemfpweakregular}\label{lem:fp-weak-regular}
Let~$G=(V,E)$ be a~$d$-regular graph on~$n$ nodes and~$\SeedSet\subseteq V$ an initial configuration of size~$|\SeedSet|=k$. Let~$a(\SeedSet)$ be the number of \emph{active} edges induced by~$\SeedSet$. Then
\[
\cGX = \frac{k(n-k)}{2n} - \frac{a(\SeedSet)}{2dn}.
\]
\end{restatable}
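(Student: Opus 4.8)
The plan is to solve the linear system \eqref{eq:mcavoy_system} for~$\eta_{uv}$ and substitute into \eqref{eq:mcavoy_cgx}, exploiting the structure of~$d$-regular graphs. The key observation is that the combinatorial coefficient $\sum_{u,v} |\Neighbors(u)\cap \Neighbors(v)|\cdot \eta_{uv}$ can be re-expressed by swapping the order of summation: $\sum_{u,v}|\Neighbors(u)\cap\Neighbors(v)|\,\eta_{uv} = \sum_{w\in V}\sum_{u\in\Neighbors(w)}\sum_{v\in\Neighbors(w)}\eta_{uv}$, since $w$ is a common neighbor of $u$ and $v$ exactly when $u,v\in\Neighbors(w)$. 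This is precisely the quantity that the recursion in \eqref{eq:mcavoy_system} manipulates: summing $\eta_{uv}$ over all $w\in\Neighbors(u)$ and all $w\in\Neighbors(v)$ is what appears on the right-hand side.

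Concretely, I would introduce the aggregate $Z := \sum_{u,v\in V}\eta_{uv}$ and, more usefully, sum the defining equation \eqref{eq:mcavoy_system} over all ordered pairs $(u,v)$ with $u\ne v$. On the left we get $Z$. The first term on the right contributes $\tfrac{n}{2}\sum_{u\ne v}|x_u-x_v| = \tfrac{n}{2}\cdot 2k(n-k) = nk(n-k)$, since there are exactly $k(n-k)$ ordered pairs with exactly one endpoint in~$\SeedSet$ counted twice (once as $(u,v)$, once as $(v,u)$), i.e. $2k(n-k)$ ordered pairs with $|x_u-x_v|=1$. The two neighbor-sum terms each contribute $\tfrac{1}{2d}\sum_{u\ne v}\sum_{w\in\Neighbors(u)}\eta_{wv}$; since $G$ is $d$-regular, each $\eta_{wv}$ is counted $d$ times as $w$ ranges over neighborhoods, so this collapses to $\tfrac{1}{2}Z$ plus a lower-order correction from the diagonal $u=v$ exclusion. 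Combining, one gets a self-referential identity $Z = nk(n-k) + Z + (\text{correction})$, which isolates the correction term — and the correction, built from the diagonal terms $\eta_{uu}=0$ and $\eta_{vv}=0$, is exactly what encodes the active-edge count.

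The cleaner route, which I expect to be the actual argument, is to apply the same summation trick directly to the target expression. Start from $2d^2n^2\cdot\cGX = \tfrac12\sum_{w}\sum_{u,v\in\Neighbors(w)}\eta_{uv}$ and, for each fixed~$w$, apply \eqref{eq:mcavoy_system} to each $\eta_{uv}$ with $u,v\in\Neighbors(w)$, $u\ne v$. The source term yields $\tfrac{n}{2}\sum_{u,v\in\Neighbors(w)}|x_u-x_v|$, and summing over~$w$ this gives $\tfrac{n}{2}\sum_w 2\,a_w(\SeedSet)$ where $a_w$ counts active edges "seen" from~$w$; a double count across the graph turns $\sum_w$ of the local active-edge contribution into a global expression involving both $k(n-k)$ and $a(\SeedSet)$. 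The neighbor-sum terms, after summation, reproduce a scaled copy of $2d^2n^2\cGX$ itself (up to the diagonal corrections $\eta_{ww}=0$), giving a closed equation whose solution is the claimed formula.

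The main obstacle will be bookkeeping the diagonal exclusions carefully: the recursion in \eqref{eq:mcavoy_system} only holds for $u\ne v$, so every time I collapse a neighbor-sum I must account for the terms where the summation index hits the diagonal, and these "missing" terms are precisely what produce the $-\,a(\SeedSet)/(2dn)$ discrepancy rather than a trivial identity. I would verify the final constant by a sanity check — e.g. on the complete graph $K_n$ (where $d=n-1$, every seed set of size~$k$ induces $a(\SeedSet)=k(n-k)$ active edges, forcing $\cGX=0$, consistent with the fact that on $K_n$ fixation probability is $k/n$ independent of which vertices are chosen), and on a cycle or a disjoint-union-like regular graph where $a(\SeedSet)$ can be made small.
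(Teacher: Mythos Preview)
Your overall strategy---collapse the linear system \eqref{eq:mcavoy_system} by summing it in two well-chosen ways---is exactly the paper's approach, but you have not yet identified the right second summation, and both routes you sketch stall before reaching~$a(\SeedSet)$. In your paragraph~2, summing over all ordered pairs $(u,v)$ is the paper's first step; the ``correction'' you describe is $-\tfrac{1}{d}\sum_{|uv|=1}\eta_{uv}$, coming from the fact that $\eta_{uv}$ appears $2d-2$ rather than $2d$ times on the right when $u$ and $v$ are adjacent. This yields $\sum_{|uv|=1}\eta_{uv}=dnk(n-k)$, which depends only on~$k$ and \emph{not} on~$a(\SeedSet)$; so this step alone cannot ``encode the active-edge count'' as you claim. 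Your paragraph~3 route---summing the recursion over $u,v\in\Neighbors(w)$ and then over~$w$---produces on the source side $\tfrac{n}{2}\sum_{u,v}|\Neighbors(u)\cap\Neighbors(v)|\,|x_u-x_v|$, a common-neighbor--weighted heterotypic count, not $a(\SeedSet)$; and on the neighbor-sum side the coefficient of $\eta_{ab}$ becomes $\sum_{u\in\Neighbors(a),\,u\ne b}|\Neighbors(u)\cap\Neighbors(b)|$, which does not reproduce the target expression, so the equation does not close.

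The missing second step is to sum \eqref{eq:mcavoy_system} over the $2|E|$ ordered pairs $(u,v)$ with $|uv|=1$. Then the source term is $\tfrac{n}{2}\sum_{|uv|=1}|x_u-x_v|=n\,a(\SeedSet)$, and for each pair $(a,b)$ the term $\tfrac{1}{2d}\eta_{ab}$ appears on the right exactly $2|\Neighbors(a)\cap\Neighbors(b)|$ times (once for each common neighbor $w$ via the equation with left-hand side $\eta_{wb}$, and once via $\eta_{aw}$). Hence $\sum_{|uv|=1}\eta_{uv}=n\,a(\SeedSet)+\tfrac{1}{d}\sum_{u,v}|\Neighbors(u)\cap\Neighbors(v)|\eta_{uv}$, and combining with the first identity and \eqref{eq:mcavoy_cgx} gives the claimed formula. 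Finally, your sanity check on~$K_n$ is miscalibrated: there $a(\SeedSet)=k(n-k)$ for every~$\SeedSet$, so the formula gives $\cGX=\tfrac{k(n-k)(n-2)}{2n(n-1)}$, which is constant across seed sets of size~$k$ (as it must be, by symmetry) but not zero---the fixation probability on~$K_n$ does depend on~$r$.
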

\begin{myproof}
Let~$|uv|$ be the shortest-path distance between~$u$ and~$v$ in~$G$. We proceed in two steps. First, we claim that summing all the~$n^2$ equations in the system of~\cref{eq:mcavoy_system}, we obtain

\[\sum_{u,v\in V}\eta_{uv}=\frac n2\cdot 2k(n-k) + \sum_{u,v\in V}\eta_{uv} - \frac1{2d}\sum_{|uv|=1}2\eta_{uv}.\numberthis\label{eq:sum-all1}\]

Indeed:
\begin{compactenum}
\item We have $\sum_{u,v\in V}|x_u-x_v|=2k(n-k)$, since each ordered pair of
heterotypic agents
is counted once.
\item For a fixed pair~$(u,v)$ with~$|uv|\ge 2$, term $\frac1{2d}\eta_{uv}$ appears on the right-hand side of~$2d$ equations, namely~$d$ equations with~$\eta_{wv}$ on the left-hand side and~$u \!\in\! N(w)$ and~$d$ equations with~$\eta_{uw}$ on the left-hand side and~$v \!\in\! N(w)$.
\item For a fixed pair~$(u,v)$ with~$|uv|=1$, term $\frac1{2d}\eta_{uv}$ appears on the right-hand side of~$2d-2$ equations, namely the above~$2d$ equations, except for 2 cases where~$w\in\{u,v\}$.
\end{compactenum}
Rearranging~\cref{eq:sum-all1}, we obtain
\[\sum_{|uv|=1}\eta_{uv} = d n k (n-k)\numberthis\label{eq:sum-all2}.\]

Second, we claim that summing the~$2|E|$ equations in the system of~\cref{eq:mcavoy_system} where~$|uv|=1$, we obtain
\[\sum_{|uv|=1}\eta_{uv} = \frac n2\cdot 2a(\SeedSet) + \frac1{2d}\sum_{u,v\in V}2|\Neighbors(u)\cap \Neighbors(v)|\cdot \eta_{uv},
\]
ergo, using~\cref{eq:sum-all2}
\[\sum_{u,v\in V}|\Neighbors(u)\cap \Neighbors(v)|\cdot \eta_{uv}
= d^2nk(n-k) - dn\cdot a(\SeedSet).
\numberthis\label{eq:sum-dist1}
\]
Indeed,
\begin{compactenum}
\item We have~$\sum_{|uv|=1}|x_u-x_v|=2a(\SeedSet)$, since each active edge is counted exactly twice, once in each direction.
\item For a fixed pair~$(u,v)$ with~$|uv|=1$, let~$N(u,v) = N(u)\cap N(v)$ be the set of common neighbors of~$u$ and~$v$. The term~$\frac1{2d}\eta_{uv}$ then appears on the right-hand side of precisely~$2|N(u,v)|$ equations, namely~$|N(u,v)|$ equations with~$\eta_{wv}$ on the left-hand side and~$|N(u,v)|$ equations with~$\eta_{uw}$ on the left-hand side, where~$w\in N(u,v)$.
\end{compactenum}
From~\cref{lem:mcavoy} and~\cref{eq:sum-dist1}, we obtain the desired
\begin{align*}
    \cGX &=\frac1{2d^2n^2} \sum_{u,v\in V}|\Neighbors(u)\cap \Neighbors(v)|\cdot \eta_{uv}\\
    &=\frac{d^2nk(n-k)-a(\SeedSet)dn}{2d^2n^2} = \frac{k(n-k)}{2n} - \frac{a(\SeedSet)}{2dn}.
\end{align*}
\end{myproof}

Thus, given a budget~$k$, we maximize the fixation probability at the limit~$r\to 1$, by minimizing the number of active edges. We are now ready to establish the main result in this section.

\begin{restatable}{theorem}{thmpositivebiasnphard}\label{thm:positive_bias_np_hard}
Fixation maximization in the biased voter process with~$r>1$ is~$\NP$-hard, even on regular undirected graphs.
\end{restatable}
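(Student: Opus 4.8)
The plan is to turn \cref{lem:max_der} and \cref{lem:fp-weak-regular} into a reduction. Together they say that on a $d$-regular graph $G$, for a small enough $\eps>0$, every size-$k$ seed set that maximizes $\fp^G_{r=1+\eps}$ is one that \emph{minimizes} the number of active edges $a(\SeedSet)$, i.e.\ minimizes the number of edges leaving $\SeedSet$. On a $d$-regular graph $a(\SeedSet)=dk-2e(\SeedSet)$, where $e(\SeedSet)$ is the number of edges inside $\SeedSet$; so, with $d$ and $k$ fixed, an optimal seed set is exactly the $k$-vertex subset with the most internal edges -- the densest-$k$-subgraph problem, which contains $k$-clique. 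I would therefore reduce from \emph{Independent Set} on cubic graphs (NP-complete, Garey--Johnson--Stockmeyer): given a $3$-regular graph $H$ on $n$ vertices and an integer $k$, pass to the complement $G=\overline{H}$, which is $(n-4)$-regular; a size-$k$ clique in $G$ is exactly a size-$k$ independent set in $H$, so $a_G(\SeedSet)=(n-4)k-2e_G(\SeedSet)$ attains, over size-$k$ sets, its minimum possible value $t:=(n-4)k-2\binom{k}{2}$ precisely when such an independent set exists.

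To phrase this as a polynomial-time many-one reduction to the decision version of fixation maximization, I would output the instance $(G,\,r=1+\eps,\,k,\,\alpha)$, where $\eps>0$ is chosen small (see below) and $\alpha$ is a threshold separating the case $a_G\le t$ from $a_G\ge t+1$. Concretely, combining \cref{eq:taylor}, the identity $\fp^G_{r=1}(\SeedSet)=\sfrac{k}{n}$ (valid since $|E|=\sfrac{dn}{2}$ on a $d$-regular graph), and \cref{lem:fp-weak-regular},
\[
\fp^G_{1+\eps}(\SeedSet)=\frac{k}{n}+\eps\Big(\frac{k(n-k)}{2n}-\frac{a_G(\SeedSet)}{2dn}\Big)+\bigO(\eps^2),
\]
so the $\eps$-linear term separates seed sets with distinct (integer) values of $a_G(\SeedSet)$ by at least $\sfrac{\eps}{2dn}$. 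Choosing $\eps$ so that the $\bigO(\eps^2)$ remainder has absolute value below $\sfrac{\eps}{4dn}$ for \emph{every} seed set, and setting $\alpha:=\sfrac{k}{n}+\eps\big(\sfrac{k(n-k)}{2n}-\sfrac{t}{2dn}\big)-\sfrac{\eps}{4dn}$, one checks that $\max_{|\SeedSet|=k}\fp^G_{1+\eps}(\SeedSet)\ge\alpha$ holds iff $\min_{|\SeedSet|=k}a_G(\SeedSet)\le t$ iff $H$ has an independent set of size $k$. Since $G$, $k$, $t$, $\eps$, $\alpha$ are produced explicitly and $G$ is regular, this yields $\NP$-hardness on regular graphs.

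The step I expect to be the real obstacle is the quantitative handling of $\eps$: the reduction is polynomial only if a usable $\eps$ has polynomially many bits, which requires a bound on the second- (and higher-) order coefficients of the analytic map $r\mapsto\fp^G_r$ near $r=1$. The first-order coefficient $\cGX$ is polynomially controlled -- it solves the $n^2$-variable system of \cref{lem:mcavoy} -- but the $\bigO(\eps^2)$ term needs its own estimate, and this is presumably what the quantitative version of \cref{lem:max_der} must supply. If such a bound is awkward to state cleanly, one can fall back on a Turing reduction: an algorithm that returns an optimal seed set, run on $(G,1+\eps,k)$ with $\eps$ below the admissible threshold, directly outputs a minimizer of $a_G$ and hence decides \emph{Independent Set}; but even then one still needs an effective lower bound on how small $\eps$ must be taken. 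A secondary, routine point is to confirm that \emph{Independent Set} hardness indeed survives on cubic graphs, so that $k$-clique is hard on regular graphs.
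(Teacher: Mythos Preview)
Your argument is correct and proceeds from the same two ingredients as the paper (\cref{lem:max_der} and \cref{lem:fp-weak-regular}), but the final reduction is different. After concluding that, on a $d$-regular graph and for sufficiently small $\eps>0$, maximizing $\fp^G_{1+\eps}$ over size-$k$ sets amounts to minimizing the cut $a(\SeedSet)$, the paper simply sets $k=n/2$ and observes that this \emph{is} Minimum Bisection, which is $\NP$-hard already on $3$-regular graphs~\cite{bui1987graph}. You instead rewrite $a(\SeedSet)=dk-2e(\SeedSet)$, pass through densest-$k$-subgraph, and import hardness by complementing a cubic graph so that a $k$-clique in $G=\overline{H}$ encodes a $k$-independent set in $H$. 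The paper's route is shorter and yields a slightly sharper statement (hardness on $3$-regular graphs rather than $(n-4)$-regular ones); your route, on the other hand, is more explicit about casting the result as a decision problem with a concrete threshold~$\alpha$.

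The concern you flag about the bit-length of~$\eps$ is legitimate and applies equally to the paper's proof, which also invokes \cref{lem:max_der} existentially without bounding how small $\eps$ must be taken. So this is not a gap peculiar to your approach but a shared subtlety in both arguments; the Turing-reduction fallback you sketch (an exact optimizer for $\fp^G_{1+\eps}$ returns a cut minimizer) is the natural reading in either case.
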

\begin{myproof}
\cref{lem:max_der} and \cref{lem:fp-weak-regular} imply that, for any regular undirected graph~$G$, there exists a small~$\eps>0$ for which
\[
\argmax_{\SeedSet\subseteq V, |\SeedSet|= k} \fp^G_{r=1+\eps}(\SeedSet) \subseteq \argmin_{\SeedSet\subseteq V, |\SeedSet|= k}a(\SeedSet)
\]
i.e., the fixation probability is maximized by a seed set that minimizes the number of active edges. For~$k=\sfrac{n}{2}$, this is known as the minimum bisection problem, which is $\NP$-hard even on 3-regular graphs~\cite{bui1987graph}.
\end{myproof}

\subsection{Monotonicity and Submodularity}\label{subsec:submodularity}

A real-valued set function~$f$ is called monotone if for any two sets~$A$ and~$B$ with~$A\subseteq B$, we have~$f(A)\leq f(B)$. Further, $f$ is called submodular if for all sets~$A$ and~$B$ we have
\[
f(A)+f(B)\geq f(A\cup B) + f(A\cap B) .
\numberthis\label{eq:submodularity}
\]
Monotone submodular functions, even if hard to maximize, admit efficient approximations. Here we complement the hardness of~\cref{thm:positive_bias_np_hard} 
by showing that~$\fp^G_{r \geq 1} (\SeedSet)$ is monotone and submodular. 
Although submodularity is well-known for some \emph{progressive} invasion processes~\cite{Kempe2003}, recall that our setting is \emph{non-progressive}, i.e., the mutant set may grow or shrink in any step. 
This non-progressiveness renders submodularity a non-trivial property.

\Paragraph{Node duplication.} 
Towards establishing monotonicity and submodularity, it is convenient to view the process in a slightly different but equivalent way, via \emph{node duplication}. Consider the voter process~$\mathcal{M}$ on a graph~$G$ with bias~$r>1$. The node duplication view of~$\mathcal{M}$ is another stochastic process~$\mathcal{M}'=(\RandomConfiguration'_i)_i$ on~$G$, defined as follows. Let~$\Configuration\subseteq V$ be the current configuration.
\begin{compactenum}
\item A node~$u$ is chosen for death uniformly at random (i.e., this step is identical to~$\mathcal{M}$).
\item We modify~$G=(V,E,\Weight)$ to~$G'=(V',E', \Weight')$ as follows. For every node~$v \in \Neighbors(u) \cap \Configuration$, we create a duplicate~$v'$ of~$v$, and insert an edge~$(v',u)$ in~$G'$ with weight~$\Weight'(v',u) = \Weight(v,u)$. We associate~$v'$ with fitness~$\fit'_{\Configuration} (v') = r-1$, while every other node~$v\in V$ gets fitness $\fit'_{\Configuration}(v)=1$. 
Finally, we execute a stochastic birth step as in the standard voter process, i.e., we choose a node~$v$ to propagate to~$u$ with probability
\[
\frac{\fit'_{\Configuration}(v)\cdot \Weight'(v,u)}{\sum_{x}\fit'_{\Configuration}(x)\cdot \Weight'(x,u)}
\]
The new configuration~$\RandomConfiguration'_{i+1}$ contains~$u$ if either some mutant node~$v$ or its duplicate~$v'$ was chosen for reproduction. It is straightforward to see that this modified birth step preserves the probability distribution of random configurations, and thus~$\mathcal{M}$ and~$\mathcal{M}'$ are equivalent.
\end{compactenum}

\paragraph{Monotonicity.} Using the equivalence between the voter process and its variant with node duplication, we now establish the monotonicity of the fixation probability.

\begin{restatable}{lemma}{lemmonotonicity}\label{lem:monotonicity}
For any graph~$G$ and bias~$r$, for any two seed sets~$\SeedSet_1$ and~$\SeedSet_2$ with~$\SeedSet_1 \subseteq \SeedSet_2$, we have~$\fp^G_{r}(\SeedSet_1) \leq \fp^G_{r}(\SeedSet_2)$.
\end{restatable}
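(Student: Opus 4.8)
The plan is to prove the lemma by a monotone (domination) coupling of the two voter processes. Fix $S_1\subseteq S_2$ and let $(\RandomConfiguration^{(1)}_i)_{i\ge 0}$ and $(\RandomConfiguration^{(2)}_i)_{i\ge 0}$ be the voter processes with seed sets $S_1$ and $S_2$. I will realize both on a common probability space and prove by induction on $i$ that $\RandomConfiguration^{(1)}_i\subseteq \RandomConfiguration^{(2)}_i$ for all $i$; the base case is exactly the hypothesis $S_1\subseteq S_2$. For the inductive step, assume $\RandomConfiguration^{(1)}_i\subseteq \RandomConfiguration^{(2)}_i$ and write $X^{(j)}=\RandomConfiguration^{(j)}_i$. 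I couple the step by (i) drawing a single death node $u$ uniformly at random, used by both processes, and (ii) drawing a single uniform $\xi\in[0,1]$ that drives the birth event: in process $j$, I declare $u$ to become a mutant iff $\xi<q^{(j)}(u)$, where $q^{(j)}(u)$ is the probability that $u$ adopts a mutant trait given configuration $X^{(j)}$; every other node keeps its current trait, exactly as in the process. This reproduces the correct one-step marginal law for each process, so it is a genuine coupling.

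The crux is the monotonicity $q^{(1)}(u)\le q^{(2)}(u)$. Writing $W_m^{(j)}=\sum_{v\in \Neighbors(u)\cap X^{(j)}}\Weight(v,u)$ and $W=\sum_{x\in \Neighbors(u)}\Weight(x,u)>0$ (positive since the support of $\Weight$ is strongly connected, so $u$ has a positive-weight in-edge), the birth rule gives $q^{(j)}(u)=rW_m^{(j)}/\bigl(rW_m^{(j)}+(W-W_m^{(j)})\bigr)$, which is a strictly increasing function of $W_m^{(j)}$ for every $r>0$ (its derivative in $W_m^{(j)}$ has numerator $rW>0$). Since $X^{(1)}\subseteq X^{(2)}$ and weights are non-negative, $W_m^{(1)}\le W_m^{(2)}$, hence $q^{(1)}(u)\le q^{(2)}(u)$. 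Therefore $\{\xi<q^{(1)}(u)\}\subseteq\{\xi<q^{(2)}(u)\}$: if $u$ becomes a mutant in process $1$ it also does in process $2$, and, equally important for the non-progressive setting, if $u$ reverts to a resident in process $2$ it also reverts in process $1$. Combined with the fact that all other nodes copy their traits from the coupled previous configurations, this yields $\RandomConfiguration^{(1)}_{i+1}\subseteq \RandomConfiguration^{(2)}_{i+1}$. (The same step can be read off the node-duplication view of \cref{subsec:submodularity} when $r\ge 1$: the extra mutants present in $X^{(2)}$ only add weight to the mutant-favoring duplicate nodes; here the direct computation handles all $r>0$ uniformly, with no need to swap the roles of mutants and residents.)

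Finally, under this coupling the all-mutant configuration $V$ is absorbing, and $\RandomConfiguration^{(1)}_i=V$ forces $\RandomConfiguration^{(2)}_i\supseteq V$, i.e.\ $\RandomConfiguration^{(2)}_i=V$. Hence the event that the $S_1$-process fixates is contained in the event that the $S_2$-process fixates, and taking probabilities gives $\fp^G_r(S_1)\le \fp^G_r(S_2)$. I expect the main obstacle to be precisely the bookkeeping that the threshold coupling preserves the domination $\RandomConfiguration^{(1)}_i\subseteq\RandomConfiguration^{(2)}_i$ across a birth step even though mutants may be lost — which is exactly what the inequality $q^{(1)}(u)\le q^{(2)}(u)$ secures; verifying that the threshold rule has the right marginals and that fixation is an increasing event is then routine.
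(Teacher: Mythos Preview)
Your proof is correct and arguably cleaner than the paper's. Both arguments are monotone couplings that share the death node and maintain the invariant $\RandomConfiguration^{(1)}_i\subseteq\RandomConfiguration^{(2)}_i$, but they differ in how the birth step is coupled. The paper first treats $r>1$ via the node-duplication device: it samples the reproducing (possibly duplicate) node in the larger process $\mathcal{M}_2$, mirrors it in $\mathcal{M}_1$ whenever that node exists there, and otherwise runs an independent birth step in $\mathcal{M}_1$; the case $r<1$ is then handled separately by swapping mutant and resident roles. Your threshold coupling instead collapses the birth step to the single Bernoulli event ``$u$ becomes mutant'' and couples via a shared uniform $\xi$, exploiting directly that $q(u)=rW_m/(rW_m+(W-W_m))$ is increasing in $W_m$ for every $r>0$. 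This gives a uniform argument across all biases and avoids both node duplication and the symmetry detour. The paper's route has the compensating advantage that its machinery (node duplication plus mirroring/independent steps) is reused verbatim in the four-way coupling for submodularity (\cref{lem:submodular}), whereas the threshold coupling is tailored to the binary mutant/resident outcome and does not obviously extend to tracking descendant sets as needed there.
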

\begin{myproof}
First assume~$r>1$. Consider two processes~$\mathcal{M}_1=(\RandomConfiguration^1_i)_i$ and~$\mathcal{M}_2 = (\RandomConfiguration^2_i)_i$ starting from~$\RandomConfiguration^1_0 = \SeedSet_1$ and~$\RandomConfiguration^2_0 = \SeedSet_2$, respectively. We establish a coupling between~$\mathcal{M}_1$ and~$\mathcal{M}_2$ that satisfies~$\RandomConfiguration^1_i \subseteq  \RandomConfiguration^2_i$ for all~$i\geq 0$, from which the lemma follows.

In each step, we first choose the same node~$u$ for death in $\mathcal{M}_2$ and~$\mathcal{M}_1$ with uniform probability. 
Then, we execute a birth step in~$\mathcal{M}_2$. 
If the reproducing node~$v$ is present in~$\mathcal{M}_1$, we perform the same update in~$\mathcal{M}_1$. 
Otherwise, if~$v$ is a mutant duplicate absent from~$\mathcal{M}_1$, we perform an independent birth step in~$\mathcal{M}_1$. 
This coupling maintains the invariant~$\RandomConfiguration^1_i \subseteq \RandomConfiguration^2_i$, as whenever~$\mathcal{M}_1$ takes an independent step, a mutant has reproduced in~$\mathcal{M}_2$. Moreover, the coupling is indeed transparent to~$\mathcal{M}_1$ and~$\mathcal{M}_2$, as every birth event occurs with probability proportional to its agent's fitness.

The monotonicity for all~$r$ follows by symmetry, as when~$r<1$ we can exchange the roles of mutants and residents. Using the above argument, it follows that the fixation probability of the residents grows monotonically in~$V\setminus \SeedSet$ and, in reverse, decreases monotonically as~$V \setminus \SeedSet$ shrinks, hence the fixation probability of the mutants grows monotonically in~$\SeedSet$.
\end{myproof}

\Paragraph{Submodularity.} 
We now show submodularity, i.e., we have
\[
\fp^G_{r}(A)+\fp^G_{r}(B)\geq \fp^G_{r}(A\cup B) + \fp^G_{r}(A\cap B) .
\numberthis\label{eq:submodularity_fp}
\]
To this end, it is convenient to consider the following more refined view of the invasion process. Given an initial seed set~$\SeedSet=A\cup B$, we keep track not only of the mutant set, but also the subset of mutants that are copies of 
those agents initially in $A$, and those initially in $B$.
Consider this refined view, and execute the invasion beyond the point where mutants fixate. 
With probability 1, every execution that results in the fixation of $S$ eventually leads to the fixation of $A$ or $B$ (possibly both, assuming that $A\cap B\neq \emptyset$).
We can thus compute the fixation probability of $\SeedSet$ by summing over each execution that ends in the fixation of at least one of $A$, $B$.
This ``extended'' process is instrumental in proving submodularity.

\begin{restatable}{lemma}{lemsubmodular}\label{lem:submodular}
For any graph~$G$ and bias~$r\geq 1$, the fixation probability~$\fp^G_{r}(\SeedSet)$ is submodular.
\end{restatable}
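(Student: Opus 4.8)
The plan is to build on the extended process above. Run the voter process from $\SeedSet=A\cup B$ forever, and for every node record its \emph{founder}, the seed from which it ultimately descends; write $M_A$ (resp.\ $M_B$) for the set of nodes whose founder lies in $A$ (resp.\ $B$). Once the mutants fixate, all nodes are mutants, so every fitness equals $r$ and the founders evolve by the (neutral) voter process, which a.s.\ reaches consensus on a single founder, necessarily in $A\cup B$. Hence fixation is exactly the event $\mathcal F_A\cup\mathcal F_B$, where $\mathcal F_A$ (resp.\ $\mathcal F_B$) is the event that eventually every node lies in $M_A$ (resp.\ $M_B$), and $\fp^G_r(A\cup B)=\Pr[\mathcal F_A\cup\mathcal F_B]$. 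Moreover $\mathcal F_A\cap\mathcal F_B$ is the event that eventually every node has a founder in $A\cap B$. Writing $\delta_A=\fp^G_r(A)-\Pr[\mathcal F_A]$, $\delta_B=\fp^G_r(B)-\Pr[\mathcal F_B]$ and $\gamma=\fp^G_r(A\cap B)-\Pr[\mathcal F_A\cap\mathcal F_B]$, inclusion--exclusion reduces the submodular inequality~\cref{eq:submodularity_fp} to the single claim $\delta_A+\delta_B\ge\gamma$.

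The key step is a \emph{grand coupling} carrying, besides the extended process from $A\cup B$, three ordinary voter processes from $A$, from $B$, and from $A\cap B$, with mutant sets $M^*_A$, $M^*_B$, $M^*_\cap$. At each step it picks a common death node and then resolves the four birth events so as to maintain, at all times, the inclusions $M_A\cap M_B\subseteq M^*_\cap\subseteq M^*_A\cap M^*_B$, $M_A\subseteq M^*_A\subseteq M_A\cup M_B$, and $M_B\subseteq M^*_B\subseteq M_A\cup M_B$ (so $M^*_A\cup M^*_B=M_A\cup M_B$ throughout). Since $X(t)\subseteq Y(t)$ for all $t$ implies $\{X\to V\}\subseteq\{Y\to V\}$, these inclusions give $\Pr[\mathcal F_A]\le\fp^G_r(A)$ and $\Pr[\mathcal F_B]\le\fp^G_r(B)$, as in~\cref{lem:monotonicity}, and also $\Pr[\mathcal F_A\cap\mathcal F_B]\le\fp^G_r(A\cap B)$ (using $M_A\cap M_B\subseteq M^*_\cap$ together with $\mathcal F_A\cap\mathcal F_B=\{M_A\cap M_B\to V\}$); thus $\delta_A,\delta_B,\gamma\ge0$ and, on this common space, $\delta_A=\Pr[M^*_A\to V,\ M_A\not\to V]$, $\delta_B=\Pr[M^*_B\to V,\ M_B\not\to V]$, $\gamma=\Pr[M^*_\cap\to V,\ M_A\cap M_B\not\to V]$. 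The argument then closes with one set inclusion: if $M^*_\cap\to V$ but $M_A\cap M_B\not\to V$, then $M^*_A\to V$ and $M^*_B\to V$ (as $M^*_\cap\subseteq M^*_A\cap M^*_B$), while $M_A\to V$ and $M_B\to V$ cannot both hold (else $M_A\cap M_B\to V$); hence this event lies in $\{M^*_A\to V,\ M_A\not\to V\}\cup\{M^*_B\to V,\ M_B\not\to V\}$, and a union bound yields $\gamma\le\delta_A+\delta_B$, which is submodularity. (For $r=1$ one can instead simply invoke additivity of $\fp^G_{r=1}$.)

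I expect the grand coupling to be the main obstacle --- showing that one step can honour all of these inclusions simultaneously. This amounts to feasibility of a finite linear program for the joint law of the membership indicators of the death node $u$ in the five sets $M_A,M_B,M^*_A,M^*_B,M^*_\cap$: the single-step capture probabilities and the joint law of the pair $(M_A,M_B)$ are fixed by the processes, and the inclusions are the remaining constraints. Once the update of $(M_A,M_B)$ is fixed and $M^*_A,M^*_B$ are chosen to hug $M_A\cup M_B$, the only non-obvious constraint is $\Pr[u\in M^*_\cap]\le\Pr[u\in M^*_A\cap M^*_B]$; the right-hand side then evaluates to $\theta_A+\theta_B-\theta_{\cup}$, where $\theta_A$, $\theta_B$, $\theta_{\cup}$ are the probabilities that $u$ is captured by $M^*_A$, by $M^*_B$, and by some mutant, and it remains to verify $\theta_\cap\le\theta_A+\theta_B-\theta_{\cup}$. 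Each such probability equals $g(z)$ for $g(z)=rz/(w+(r-1)z)$ with $w=\sum_{x\in N(u)}\Weight(x,u)$, evaluated at the relevant ``fitness mass'' $z$ of $u$'s neighbours, and the invariant $M^*_A\cup M^*_B=M_A\cup M_B$ gives $z_\cup=z_A+z_B-z_{A\cap B}$ with $z_\cap\le z_{A\cap B}$ and $z_\cap\le z_A,z_B$; the inequality then follows from concavity of $g$ via a spreading (Schur-concavity) estimate. This concavity is exactly where the hypothesis $r\ge1$ enters. The degenerate cases --- zero denominators, or a death node with no heterotypic common neighbour --- are routine.
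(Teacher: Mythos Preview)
Your proposal follows the same high-level strategy as the paper: couple four voter processes (from $A$, $B$, $A\cup B$, $A\cap B$), track founders inside the $A\cup B$ process so that its fixation event decomposes as $\mathcal{F}_A\cup\mathcal{F}_B$, and derive submodularity from event inclusions on the coupled space. The differences are in the bookkeeping and in how the coupling is built.

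First, the paper avoids your inclusion--exclusion detour to $\delta_A+\delta_B\ge\gamma$. On the coupled space it simply observes $\{\text{fix in }\mathcal{M}_3\}\subseteq\{\text{fix in }\mathcal{M}_1\}\cup\{\text{fix in }\mathcal{M}_2\}$ (via founder consensus) and $\{\text{fix in }\mathcal{M}_4\}\subseteq\{\text{fix in }\mathcal{M}_1\}\cap\{\text{fix in }\mathcal{M}_2\}$, whence $\fp(A\cup B)+\fp(A\cap B)\le\Pr[\text{fix}_1\text{ or }\text{fix}_2]+\Pr[\text{fix}_1\text{ and }\text{fix}_2]=\fp(A)+\fp(B)$. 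This needs only the three invariants $M_A\subseteq M^*_A$, $M_B\subseteq M^*_B$, $M^*_\cap\subseteq M^*_A\cap M^*_B$, not your six; your extra sandwiches $M^*_A,M^*_B\subseteq M_A\cup M_B$ and $M_A\cap M_B\subseteq M^*_\cap$ are unnecessary.

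Second, rather than an abstract LP-feasibility argument, the paper constructs the coupling explicitly via \emph{node duplication}: each mutant neighbor $v$ of the dying node gets a duplicate $v'$ of fitness $r-1$, all original nodes get fitness~$1$, and the birth step is run on this augmented neighborhood. One samples the reproducing node once in $\mathcal{M}_3$ and cascades the same choice to any $\mathcal{M}_j$ where that (possibly duplicate) node exists, falling back to an independent step otherwise. Invariant preservation is then immediate, and the hypothesis $r\ge1$ enters only as $r-1\ge0$. Your concavity route is correct---the concavity of $g(z)=rz/(w+(r-1)z)$ for $r\ge1$ is exactly what node duplication encodes---but it is considerably heavier; the paper's construction sidesteps the per-step LP entirely.
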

\begin{myproof}
Consider any two sets~$A,B\subseteq V$. 
Let~$\mathcal{M}_1$, $\mathcal{M}_2$, $\mathcal{M}_3$ and~$\mathcal{M}_4$ be the four (extended) invasion processes, under node duplication, that correspond to initial seeds~$A$, $B$, $A\cup B$ and~$A\cap B$, respectively. For~$j\in [4]$, let~$\RandomConfiguration^j_i$ be the~$i$-th random configuration of~$\mathcal{M}_j$. Moreover, let~$\RandomYConfiguration_i$ (resp., $\RandomZConfiguration_i$) be the set of nodes of $\mathcal{M}_3$ in step~$i$  that are occupied by mutant agents  who are descendants of some agent initially in~$A$ (resp., $B$). We establish a four-way coupling between all~$\mathcal{M}_j$ that preserves the following invariants:
\[
\text{(i)}~\RandomYConfiguration_i\subseteq \RandomConfiguration^1_i
\qquad
\text{(ii)}~\RandomZConfiguration_i\subseteq \RandomConfiguration^2_i
\quad
\text{and}
\quad
\text{(iii)}~\RandomConfiguration^4_i\subseteq \RandomConfiguration^1_i\cap \RandomConfiguration^2_i .
\]
Given our extended invasion process, any execution that leads to the fixation of $A\cup B$ in $\mathcal{M}_3$ will eventually witness the fixation of at least one of $A$ or $B$.
Invariants (i) and (ii), then,  guarantee that any fixating execution in~$\mathcal{M}_3$ is also fixating in at least one of~$\mathcal{M}_1$ and~$\mathcal{M}_2$.
Invariant (iii) guarantees that any fixating run execution in~$\mathcal{M}_4$ is fixating in both~$\mathcal{M}_1$ and~$\mathcal{M}_2$. 
Hence, the three invariants imply \cref{eq:submodularity_fp}, i.e., submodularity.

We now establish the coupling. In each step~$i$, we choose the same node~$u$ in all~$\mathcal{M}_j$ to die uniformly at random. Then, as long as~$\RandomYConfiguration_i,\RandomZConfiguration_i \neq \emptyset$ we perform the following steps.
\begin{compactenum}
\item Choose a node~$v$ to reproduce in~$\mathcal{M}_3$ using node duplication (hence~$v$ is either an original or a duplicate node). For each process~$\mathcal{M}_j$ in which~$v$ is a node (either original or duplicate), propagate the agent from~$v$ to~$u$. Note that this step updates~$\mathcal{M}_3$ and at least one of~$\mathcal{M}_1$ and~$\mathcal{M}_2$, while if it updates~$\mathcal{M}_4$, it also updates both~$\mathcal{M}_1$ and~$\mathcal{M}_2$.
\item If~$v$ is not a node in one of~$\mathcal{M}_1$ or~$\mathcal{M}_2$, then perform an independent birth step in that process, choosing some node~$x$ (either original or duplicate). If~$x$ is present in~$\mathcal{M}_4$, apply the birth event in~$\mathcal{M}_4$ as well, otherwise apply an independent birth step in~$\mathcal{M}_4$.
\end{compactenum}

If $\RandomYConfiguration_i= \emptyset$ or $\RandomZConfiguration_i= \emptyset$, we perform step~2 only in the process not yet terminated. 
As in~\cref{lem:monotonicity}, the coupling is indeed transparent to each~$\mathcal{M}_j$ and maintains invariants~(i)--(iii).
\end{myproof}

Conclusively, monotonicity and submodularity lead to the following approximation guarantee~\cite{Nemhauser1978}. 

\begin{restatable}{theorem}{thmpositivebiasapprox}\label{thm:positive_bias_approx}
Given a graph~$G$ and integer~$k$, let~$S^*$  be the seed set that maximizes~$\fp^G_{r}(\SeedSet)$,
and~$S'$ the seed set constructed by a greedy algorithm opting for maximal returns in each step. Then~$\fp^G_{r} (\SeedSet') \geq \left(1-\sfrac{1}{e}\right)\fp^G_{r}(\SeedSet^*)$.
\end{restatable}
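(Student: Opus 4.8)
The plan is to obtain the statement as an immediate consequence of the classical analysis of the greedy algorithm for monotone submodular maximization under a cardinality constraint, due to Nemhauser, Wolsey and Fisher. That result guarantees a factor of $1-\sfrac1e$ whenever the objective $f$ is (i) non-negative, (ii) monotone, and (iii) submodular. Property (ii) has already been secured in \cref{lem:monotonicity}, and property (iii) in \cref{lem:submodular}; the latter requires $r\geq 1$, which is exactly the regime of this section. Property (i) is immediate: $\fp^G_r(\emptyset)=0$, since with no initial mutants fixation never occurs, and monotonicity then gives $\fp^G_r(\SeedSet)\geq 0$ for every~$\SeedSet$.

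With these three properties in hand the argument is the standard induction. Writing $\SeedSet'=\{v_1,\dots,v_k\}$ for the greedily chosen elements — each $v_{j+1}$ maximizing the marginal gain $\fp^G_r(\{v_1,\dots,v_j,v\})-\fp^G_r(\{v_1,\dots,v_j\})$ — and $\SeedSet^*$ for the optimum, submodularity together with monotonicity yields that the marginal gain of the $(j+1)$-th greedy pick is at least $\tfrac1k\bigl(\fp^G_r(\SeedSet^*)-\fp^G_r(\{v_1,\dots,v_j\})\bigr)$; iterating this recursion over the $k$ steps gives $\fp^G_r(\SeedSet')\geq \bigl(1-(1-\tfrac1k)^k\bigr)\fp^G_r(\SeedSet^*)\geq (1-\sfrac1e)\fp^G_r(\SeedSet^*)$. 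There is essentially no obstacle at this stage: all of the substance lies in having established monotonicity and submodularity of a \emph{non-progressive} process, which was done via the node-duplication coupling.

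One caveat worth recording is that the greedy algorithm must compare marginal gains of $\fp^G_r$, whose exact value is not known to be polynomially computable. On undirected graphs this is handled by the FPRAS of \cref{thm:fpras}: estimating each marginal gain to within a small relative error (and using the estimates to break ties) costs only polynomial time and degrades the guarantee to $1-\sfrac1e-\eps$ for any fixed $\eps>0$, by the usual robustness of the greedy analysis to approximate oracle calls.
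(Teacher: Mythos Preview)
Your proposal is correct and matches the paper's approach exactly: the paper simply invokes the Nemhauser--Wolsey--Fisher result~\cite{Nemhauser1978} after having established monotonicity (\cref{lem:monotonicity}) and submodularity (\cref{lem:submodular}), without spelling out the standard greedy recursion or the oracle-access caveat that you helpfully include.
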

\begin{figure*}[!t]
\centering
\includegraphics[scale=0.65]{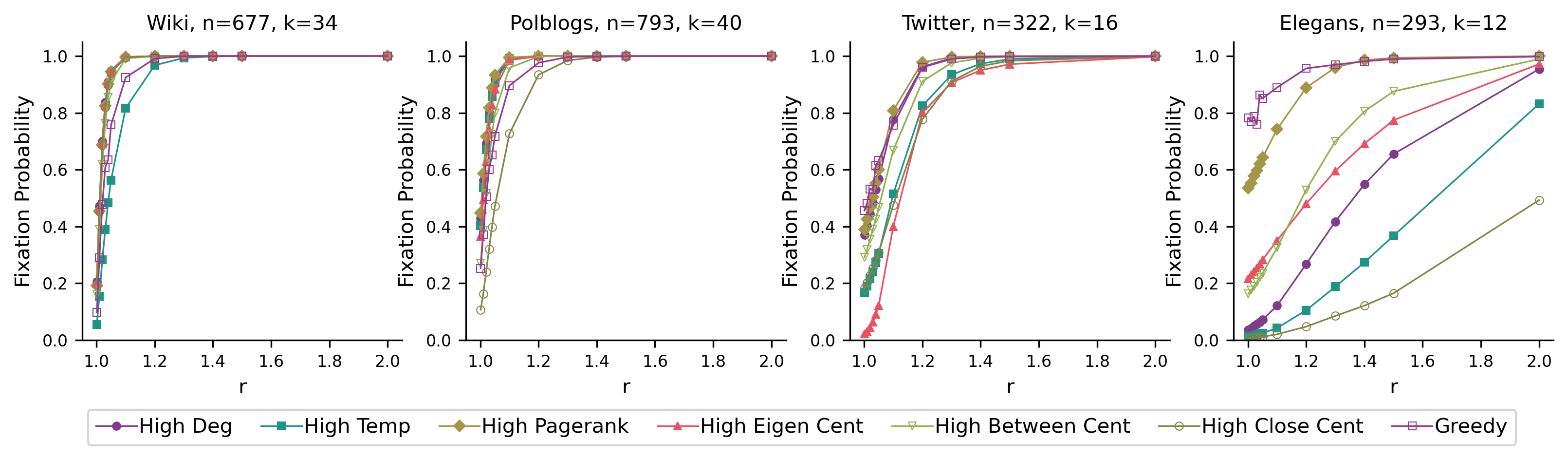}
\caption{
Experimental comparison of heuristics on four graphs.
\label{fig:experiments}
}
\end{figure*}
\section{Invasion with Infavorable Bias}\label{sec:negative_bias}

We now turn our attention to disadvantageous invasions, i.e.,~$r<1$, and show that the problem is intractable. 
To establish our result we focus on the limit~$r\to 0$, i.e., the process is ``infinitely'' biased against invasion,
and write $\fp^G_{0} (\SeedSet) = \lim_{r\to 0}\fp^G_{r}(\SeedSet)$.
Our key insight is the following lemma.

\begin{restatable}{lemma}{lemnegativebiasvc}\label{lem:negative_bias_vc}
Let $G=(V,E)$ be an undirected graph on $n$ nodes and $S\subseteq V$ a seed set.
If $S$ is a vertex cover on $G$ then $\fp^G_0(S)>1/n^n$, otherwise $\fp^G_0(S)=0$.
\end{restatable}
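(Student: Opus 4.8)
The plan is to prove the two directions separately, with the ``only if'' direction (non-vertex-cover $\Rightarrow$ fixation probability $0$) being the technical heart.

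For the \emph{only if} direction, suppose $S$ is not a vertex cover, so there is an edge $uv\in E$ with $u,v\notin S$. I claim $\fp^G_0(S)=0$. The intuition is that at $r\to 0$ a mutant can essentially never reproduce onto a node that also has a resident neighbour: whenever the focal node $w$ has at least one resident neighbour, the probability that a mutant neighbour is chosen scales like $r$. So for fixation to ever occur, the process must reach a state where \emph{every} active edge is simultaneously ``resolved'' in favour of mutants, which in the $r\to 0$ limit happens with vanishing probability relative to the competing resident events — formally, one shows $\fp^G_r(S)=\Theta(r^{c})$ for some $c\ge 1$ depending on $S$, so the limit is $0$ whenever $S$ fails to touch every edge. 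The cleanest way to make this rigorous: couple/compare against the observation that starting from $S$, with probability $1$ the first step that changes anything at the edge $uv$ either turns $u$ or $v$ into a resident-adjacent-to-resident configuration, and track that in order to fixate, starting from a configuration that is not a vertex cover, the process must pass through a state with strictly fewer resident-covered edges via a step whose probability is $O(r)$; summing a bounded number of such $O(r)$ bottleneck transitions (bounded because the potential $\phi$ or the number of active edges is bounded) gives $\fp^G_r(S)=O(r)\to 0$. Alternatively and more elementarily: show that from $S$, the probability of \emph{ever} reaching a vertex-cover configuration before extinction is $O(r)$, and observe fixation requires passing through vertex-cover configurations (the all-mutants state is a vertex cover, and more to the point, any configuration two steps before fixation must already be ``almost'' a cover) — I expect the referee-proof version to argue that the mutant set, to grow, must at some point make a node with a resident neighbour become a mutant, an event of probability $O(r)$, and only finitely many such growth events are needed, each independently $O(r)$-bottlenecked.

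For the \emph{if} direction, suppose $S$ is a vertex cover. I want a concrete positive lower bound on $\fp^G_0(S)$, uniform as $r\to 0$. Here I exhibit an explicit fixating trajectory whose probability stays bounded below. Order $V\setminus S=\{w_1,\dots,w_m\}$ arbitrarily; consider the deterministic sequence of $n$ death events: pick $w_1$, then $w_2$, \dots, then $w_m$ (after that the configuration is all-mutant, so we are done). At the step where $w_j$ is picked for death, every neighbour of $w_j$ is a mutant — neighbours in $S$ are mutants since $S\subseteq \RandomConfiguration_0$ and have never died, and neighbours in $V\setminus S$ are among $w_1,\dots,w_{j-1}$, already converted — wait, that is only true if we are careful about the order; in fact since $S$ is a vertex cover, $w_j$ has \emph{no} neighbour in $V\setminus S$ at all, so \emph{all} neighbours of $w_j$ are in $S$ and hence mutant at every time before $w_j$'s conversion. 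Therefore at that step $w_j$ adopts a mutant trait with probability exactly $1$ (the numerator and denominator of the birth probability coincide, independent of $r$). The probability of this particular trajectory of $m\le n$ death choices is $(1/n)^m\ge (1/n)^n$, and conditioned on it the process has fixated. Hence $\fp^G_0(S)\ge 1/n^n>0$, and since the bound is independent of $r$ it survives the limit; this is exactly the stated bound $\fp^G_0(S)>1/n^n$ (strict, as the other death orders contribute positively too, or simply because the inequality $(1/n)^m > 1/n^n$ is strict when $m<n$, and when $m=n$, i.e.\ $S=\emptyset$ is impossible for a nonempty-edge graph, so $m\le n-|S|<n$).

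The main obstacle is the ``only if'' direction: making precise that fixation in the $r\to 0$ limit is impossible without a vertex cover. The subtlety is that the process is non-progressive, so one cannot just say ``mutants never grow''; rather one must quantify that \emph{net} progress toward fixation requires $\Omega(1)$ many $O(r)$-probability events and bound the total. I would formalize this by a potential/bottleneck argument: define the set of ``blocked'' edges (active edges, i.e.\ heterotypic) and note that from a non-vertex-cover seed the configuration has a resident node all of whose... no — the right statement is that reaching the all-mutant absorbing state requires, at the last step, converting the final resident, whose every neighbour must then be mutant; tracing backward, I argue that \emph{every} trajectory from $S$ to fixation contains at least one transition in which a node with a strictly positive number of resident neighbours becomes mutant while having at least one resident neighbour, each such transition carrying a factor $O(r)$, and that the number of these transitions along any fixating trajectory is at least $1$ while the expected number of steps is polynomially bounded by \cref{thm:ub-time}; a union bound over the (polynomially many in expectation) steps then yields $\fp^G_r(S)=O(\mathrm{poly}(n)\cdot r)\to 0$. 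I expect the paper's actual argument to be a clean version of this ``$O(r)$ bottleneck'' idea.
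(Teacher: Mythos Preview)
Your ``if'' direction matches the paper's proof exactly: enumerate $V\setminus S$, kill those nodes in order, observe that since $S$ is a vertex cover each $w_j$ has all its neighbours in $S$ (which have never been touched and are therefore still mutants), so each conversion happens with probability $1$ independently of $r$, and the chosen death sequence has probability $(1/n)^{|V\setminus S|}>1/n^n$.

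For the ``only if'' direction your final approach is correct but takes a different route from the paper. The paper does \emph{not} invoke \cref{thm:ub-time}; it runs a direct gambler's-ruin argument: from any configuration in which the uncovered edge $(u,v)$ still has both endpoints resident, there is an explicit $k$-step trajectory (kill the nodes of $S$ in an order respecting connectivity to $V\setminus S$) that reaches extinction with probability at least $n^{-2k}$, while the probability that $u$ or $v$ turns mutant within those same $k$ steps is $O(r)$; iterating over blocks of $k$ steps yields $\fp^G_r(S)\le O(r)/\bigl(O(r)+n^{-2k}\bigr)\to 0$. Your route---fixation forces a first step at which one of $u,v$ becomes mutant while the other is still its resident neighbour, that step has conditional probability at most $2r$, and the process absorbs in expected polynomial time by \cref{thm:ub-time}---also works, but the phrase ``a union bound over the (polynomially many in expectation) steps'' is not yet a proof. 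The clean formalization is
\[
\Pr[\tau<\infty]=\sum_{i\ge 1}\Pr[\tau=i]\le\sum_{i\ge 1}\Pr[T\ge i]\cdot 2r=2r\,\E[T]=O(n^4 r)\to 0,
\]
where $\tau$ is the first time one of $u,v$ turns mutant and $T$ is the absorption time (note that the bound in \cref{thm:ub-time} is uniform as $r\to 0$). Your approach is more modular since it reuses the absorption-time bound; the paper's is self-contained and gives a bound depending only on $|S|$. Two presentational points: prune the several abandoned sketches and keep only the last argument, and make explicit that the uncovered edge $(u,v)$ is what witnesses the bottleneck step (your ``tracing backward'' paragraph never quite says this).
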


\begin{myproof}
First, suppose~$\SeedSet$ is a vertex cover.
Denote the nodes in $V\setminus \SeedSet$ by $u_1,\dots,u_i$.
With probability~$1/n^i\ge 1/n^n$,
in the next~$i$ steps the agents at nodes~$u_1,\dots,u_i$
are selected for death (in this order).
Since~$\SeedSet$ is a vertex cover,
each of them adopts the mutant trait, and mutants fixate.

Second, suppose $S$ is not a vertex cover, thus we have an edge $(u,v)\in E$ with $u,v\not \in \SeedSet$.
The idea is that, in the limit $r\to 0$, the agents on $u$ and $v$ are
overwhelmingly more likely to spread to the rest of the graph than to ever become mutants.

\newcommand{\pwin}{p_{\operatorname{ext}}}
\newcommand{\plose}{p_{\operatorname{uv}}}
\newcommand{\pres}{p_{\operatorname{res}}}
\newcommand{\pmut}{p_{\operatorname{mut}}}

Consider a node with $a$ mutant and $b\ge 1$ resident agents among its neighbors.
Note that if an agent at such a node dies,
the node gets occupied by a resident agent with probability
$\frac{b}{b+ar}\ge \frac{1}{n}=:\pres$
and by a mutant agent with probability
$\frac{ar}{b+ar}\le nr=:\pmut$.
Since $G$ is connected, there exists an ordering $(u_1,\dots,u_k)$ of the nodes in $S$
such that node $u_i$ has a neighbor among $\{u_1,\dots,u_{i-1}\}\cup(V\setminus S)$, for each $i=1,\dots,k$.
Consider the first $k$ steps.
With probability $\pwin \ge \left(\frac1n\cdot\pres\right)^k =1/n^{2k}$,
agents at nodes $u_1,\dots,u_k$ die (in this order) and all become residents,
thus mutant extinction occurs.
On the other hand, by Union Bound, within this time-frame, one of $u$, $v$ becomes mutant with probability
$\plose\le \frac2n\pmut\le 2r$.
If neither event occurs, in the next $k$ steps the situation repeats.
Thus, mutants fixate with probability at most $\frac{\plose}{\plose+\pwin}\le
\frac{2r}{2r+1/n^{2k}}
\to_{r\to0} 0$.
\end{myproof}

Since vertex cover is known to be hard even on regular graphs, 
\cref{lem:negative_bias_vc} implies the following theorem.

\begin{restatable}{theorem}{thmnegativebiasnphard}\label{thm:negative_bias_np_hard}
Fixation maximization in the biased voter process with~$r<1$ is \NP-hard, even on regular undirected graphs.
\end{restatable}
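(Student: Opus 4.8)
The plan is a polynomial-time reduction from the vertex cover problem, which remains $\NP$-hard on connected cubic graphs, invoking \cref{lem:negative_bias_vc} essentially as a black box. Given such a graph $G$ on $n$ nodes together with a budget $k$ (we may assume $k\le n$), I would output the decision instance of fixation maximization consisting of $G$ itself — so regularity, and the connectivity required by the voter process, are preserved for free — the same budget $k$, a bias $r^{\star}\in(0,1)$, and a threshold $\alpha^{\star}$. The only substantive part of the reduction is the choice of $r^{\star}$ and $\alpha^{\star}$, and the whole argument reduces to showing that, for an appropriate and polynomially representable choice, $G$ admits a vertex cover of size $\le k$ if and only if some seed set $\SeedSet$ with $|\SeedSet|=k$ satisfies $\fp^G_{r^{\star}}(\SeedSet)\ge\alpha^{\star}$.

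For correctness I would reread the proof of \cref{lem:negative_bias_vc} at a fixed positive $r$ rather than only in the limit $r\to 0$. Its first half, applied to a vertex cover $\SeedSet$, says that every node outside $\SeedSet$ has all of its neighbours inside $\SeedSet$; hence, in the event that the $n-k$ non-seed nodes are selected for death in a fixed order over $n-k$ consecutive steps, each dying node sees only mutant neighbours and becomes a mutant with probability $1$ for \emph{every} $r>0$. This yields $\fp^G_r(\SeedSet)\ge n^{-(n-k)}\ge n^{-n}$ whenever $\SeedSet$ is a vertex cover and $r>0$. Its second half, applied to a non-vertex-cover $\SeedSet$ (using that a mostly-resident node adopts a resident with probability at least $1/n$ and a mutant with probability at most $nr$, both uniformly over $r\le 1$), gives $\fp^G_r(\SeedSet)\le \frac{2r}{2r+n^{-2k}}$ for all $r\in(0,1)$. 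I would then set $r^{\star}=n^{-4n}$, which (using $k\le n$) makes this right-hand side strictly smaller than $n^{-n}$, and $\alpha^{\star}=n^{-n}$. With these choices a size-$k$ vertex cover certifies $\fp^G_{r^{\star}}(\SeedSet)\ge\alpha^{\star}$ (and a smaller vertex cover can be padded up to size exactly $k$), while if $G$ has no vertex cover of size $\le k$ then every size-$k$ seed set fails to be a vertex cover and hence has $\fp^G_{r^{\star}}(\SeedSet)<\alpha^{\star}$. Since $r^{\star}$ and $\alpha^{\star}$ each have $O(n\log n)$ bits, the reduction runs in polynomial time and $\NP$-hardness follows.

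The main obstacle is the quantitative bookkeeping rather than any combinatorial difficulty: one must verify that the separation between vertex covers and non-vertex covers already occurs at a bias $r^{\star}$ that is singly (not doubly) exponentially small, so that $r^{\star}$ and the matching threshold $\alpha^{\star}$ are efficiently representable, and one must check that the estimates inside \cref{lem:negative_bias_vc} hold over an entire interval of positive $r$ with instance-independent constants, not merely asymptotically as $r\to 0$. A minor point to make explicit is that the reduction should start from vertex cover on \emph{connected} regular graphs, since the voter process is defined only on connected structures; this restriction is still $\NP$-hard, so nothing is lost.
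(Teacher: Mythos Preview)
Your proposal is correct and follows the same route as the paper: reduce from vertex cover on regular (connected) graphs and invoke \cref{lem:negative_bias_vc} to separate vertex covers from non-vertex-covers. In fact your write-up is more complete than the paper's one-line proof, since you explicitly extract from the proof of \cref{lem:negative_bias_vc} a quantitative separation at a concrete, polynomially-representable bias $r^{\star}=n^{-4n}$ and threshold $\alpha^{\star}=n^{-n}$, whereas the paper leaves this bookkeeping implicit.
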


\paragraph{Submodularity considerations.} 
\cref{lem:negative_bias_vc} implies that $\max_{\SeedSet\subseteq V, |\SeedSet|= k} \fp^G_{0}(\SeedSet)$ cannot be approximated within any multiplicative factor in polynomial time,
unless~$\PClass=\NP$, since such an algorithm would decide whether~$\SeedSet$ is a vertex cover on~$G$. 
That is so even though \cref{lem:submodular} implies, by symmetrically exchanging mutant and resident roles, that the \emph{extinction probability} $1-\fp^G_{r}(\SeedSet)$ for~$r<1$ is submodular. 
Fixation maximization with~$r<1$ is thus equivalent to minimizing a submodular function subject to the constraint~$|\SeedSet|=k$. Although the minimization of submodular functions without constraints is polynomially solvable, simple cardinality constraints render it intractable~\cite{Svitkina2011}.
\section{Experiments}\label{sec:experiments}

Here we present some selective case studies on the performance of various heuristics solving the fixation maximization problem for the biased voter process.
Our studies do not aim to be exhaustive, but rather offer some insights that may drive future work. We consider six heuristics for the choice of the seed set, namely
(i)~high degree,
(ii)~high temperature, defined as $\Temp(u)=\sum_{v} \Weight(v,u)$,
(iii)~high pagerank,
(iv)~high eigenvalue centrality,
(v)~high betweeness centrality,
(vi)~high closeness centrality, 
and the Greedy algorithm (\cref{thm:positive_bias_approx}) that provides an approximation guarantee for~$r\geq 1$. We evaluate all methods on four networks from the Netzschleuder database~\cite{Netzschleuder}, chosen arbitrarily. 
In each case, we choose a budget~$k$ equal to~$5\%$ of the nodes of the graph.


\cref{fig:experiments} shows the performance for various biases~$r\geq 1$. 
The networks Wiki and Polblogs are well-optimizable by almost all heuristics, possibly except high closeness centrality in the second case that performs visibly worse (though not by much).
The Twitter graph is more challenging, while the Elegans graph is the most difficult to optimize, with high variability on heuristic performance.
Unsurprisingly, the Greedy algorithm dominates the performance on this benchmark, 
which is expected given its theoretical guarantees (\cref{thm:positive_bias_approx}).
Among the heuristics, high pagerank appears to be the most consistently performant, and the first one to match the performance of Greedy on the Elegans network.

\section{Conclusion}

We have studied the optimization problem of selecting a cardinality-constrained \emph{seed set} of mutants that maximizes the fixation probability in the classic biased voter process. 
We have shown that
the problem exhibits intricate complexity properties in the presence of biases~$r\neq 1$. 
In particular, the problem is $\NP$-hard in both regimes~$r>1$ and~$r<1$, while the latter case is also hard to approximate. 
On the positive side, we have shown that the optimization function is monotone and submodular, and can thus be approximated efficiently within a factor~$(1-\sfrac{1}{e})$. 
Interestingly, our results on optimization hardness imply that even
just computing the fixation probability over randomly chosen seed sets is also $\NP$-hard.
Random seed sets arise in genetic/biological settings~\cite{Kaveh2015,Allen2020b,Tkadlec2020},
due to the random nature of novel mutations.
Interesting future work includes investigating better approximation factors for $r>1$, and an in-depth search for good heuristics.

\bibliographystyle{named}
\bibliography{bibliography}

\newpage
\appendix
\section{Appendix}\label{sec:appendix}

\thmubtime*
\begin{proof}{}
It remains to prove that:
\begin{compactenum}
\item When at a bad configuration, the next configuration is good
and the quantity does not change in expectation.
\item When at a good configuration, in one step the quantity increases by at least $s=(r-1)/(r\Delta n)$ in expectation.
\end{compactenum}

Let $X$ be a current configuration.
For any edge $(u,v)$, let
\[
\puv=\frac1n\cdot \frac{ \fit_X(u) }{ \sum_{u'\in N(v)} \fit_X(u')}
\]
be the probability that, in a single step, node $u$ reproduces onto node $v$.
We say that an edge $(u,v)\in E$ is \emph{nice} if $u$ is occupied by a mutant and $v$ by a resident.
Let $Y$ be a (random) configuration after one step from $X$.
Then
\[
\E[\phi(Y)-\phi(X)] = \sum_{(u,v)\text{ is nice}}
\contrib(u,v),
\]
where $\contrib(u,v)=\puv\cdot\deg(v) - \pvu\deg(u)$ is the contribution of a single nice edge (summed over both directions).

Consider a fixed nice edge $(u,v)$.
To prove the claims in points 1. and 2., it suffices to show that:
\begin{enumerate}
\item[(i)] If all edges incident to $u$ and $v$ connect nodes of different types, then $\contrib(u,v)=0$.
\item[(ii)] Otherwise, $\contrib(u,v)\ge s$, where $s=(r-1)/(r\Delta n)$.
\end{enumerate}

To prove (i), we simply rewrite
\begin{align*}
\contrib(u,v)=\frac1n \frac{r}{r\deg(v)} \deg(v) - \frac1n\frac{1}{\deg(u)}\deg(u)
= 0.    
\end{align*} 
To prove (ii), we distinguish 2 cases:
First, suppose that $v$ has at least one resident neighbor.
Then
\begin{align*}
\contrib(u,v)  &\ge \frac1n\frac{r}{r(\deg(v)-1) +1} \deg(v) -  \frac1n\frac{1}{\deg(u)}\deg(u) \\
&= \frac1n\frac{r-1}{r(\deg(v)-1)+1}
\ge \frac1n \frac{r-1}{r\Delta}=s.
\end{align*}
Similarly, if $u$ has at least one mutant neighbor then
\begin{align*}
\contrib(u,v)  &\ge
\frac1n \frac{r}{r\deg(v)} \deg(v)  - \frac1n\frac{1}{(\deg(u)-1) +r} \deg(u)\\
&= \frac1n\frac{r-1}{\deg(u)-1+r}
\ge \frac1n \frac{r-1}{\Delta}\ge s. \qedhere
\end{align*}
\end{proof}

\end{document}